\newcommand{\sub}[1]{\ensuremath{_{\mathrm{#1}}}} 
\newcommand{\sps}[1]{\ensuremath{^{\mathrm{#1}}}}
\newtheorem{theorem}{Theorem}
\newtheorem{lemma}{Lemma}
\begin{document}
\title{Theorems on Entanglement Typicality in Non-equilibrium Dynamics}
\author{Koji Yamaguchi}
\affiliation{Graduate School of Science, Tohoku University,\\ Sendai, 980-8578, Japan}
 \begin{abstract}
  The notion of typicality in statistical mechanics is essential to
  characterize a macroscopic system. An overwhelming majority of the
  pure state looks almost identical if we neglect macroscopic non-local
  correlations, suggesting that thermal equilibrium is the collection of
  the typical properties. Quantum entanglement, which
  characterizes a non-local correlation, also has a typical behavior in
  equilibrium systems. However, it remains elusive whether there is a
  typical behavior of entanglement in dynamical non-equilibrium systems.
  To investigate the typicality, we consider a situation where a system
  in a pure state starts to share entanglement with its
  environment system due to the interaction between them. Assuming the
  initial state is randomly chosen from an ensemble of pure states, a criteria for
  the typicality of the R\'enyi entropies is presented. In addition, it
  is analytically proven that the second R\'enyi entropy has a typical
  behavior in two cases. The first one is an energy dissipation process
  in a multiple-qubit system which is initially in a random pure state
  in an energy shell. Since the typical behavior is qualitatively the
  same as the prediction of the Page curve conjecture, it gives the
  first proof of the Page curve conjecture in a dynamical process. In
  the second case, the typicality is proven for any dynamics described by a
  multiple-product of a single-qudit channel when the system is
  initially in a pure state randomly chosen from the whole Hilbert space. This result shows that entanglement typicality is not a specific feature of energy dissipating processes.
 \end{abstract}
\begin{flushright}
TU-1061
\end{flushright}
 \maketitle
\section{Introduction}
Typicality is one of the essential concepts in macroscopic
systems. The Sugita theorem \cite{Sugita} shows that macroscopic
observables composed of the sum of local operators cannot distinguish an
overwhelmingly majority of pure states in an energy shell, which implies
that if we neglect macroscopic non-local correlations, almost all of the
pure states look quite similar to each other. This mathematical fact
allows us to make a postulate that thermal equilibrium is the collection
of the typical behaviors \cite{Tasaki}. Then, what we practically
observe in an equilibrium system is predictable by using the
microcanonical ensemble average. Typicality gives an intuitive
explanation for the existence of the arrow of time in macroscopic
systems \cite{Lebowitz}. It should be noted that this interpretation of
thermal equilibrium is not directly related to the longstanding argument on the ergodicity in classical systems as a justification of the microcanonical ensemble. Actually,
it is known that the ``ergodic time'' is too long to give a sufficient
justification for a possible time-average in a realistic experiment \cite{Bricmont}. Since the Sugita theorem does not rely on the detailed dynamics of the system, it explains the reason why thermal equilibrium is so ubiquitous, and suggests that thermodynamic properties may be useful to investigate a system whose dynamics remains unknown. In fact, although black holes do not seem to be a normal thermodynamic system, it is known that they have thermodynamics-like laws \cite{Bardeen_Carter_Hawking}, and that they emit the Hawking radiation whose spectrum is thermal in the semi-classical approximation of the quantum gravity theory \cite{Hawking_rad}. 

There is another well-known typical feature in macroscopic equilibrium
systems --- typicality of entanglement. Entanglement characterizes a
non-local nature of the system, which is quantified by various kinds of
measures such as the entanglement entropy and the R\'enyi entropy. The
Lubkin--Lloyd--Pagels--Page (LLPP) theorem
\cite{Lubkin,Lloyd_Pagels,Page_sub} shows that a small subsystem
typically shares almost maximal entanglement with its complement system,
if the total macroscopic system is in a pure state randomly chosen from
the whole Hilbert space according to the Haar measure. This distribution
of state corresponds to the microcanonical ensemble for a system with a
trivial free Hamiltonian $H=0$, or a system at infinite temperature
where its free Hamiltonian is effectively negligible. Nevertheless, it
is not hard to expect that, in a system with a nontrivial Hamiltonian at
a finite temperature, the typical entanglement entropy is approximately
given by its maximum, i.e., the thermal entropy of the smaller
subsystem. A heuristic explanation is that the ordinary statistical
mechanical argument suggests the smaller subsystem is well described by
the Gibbs state, which is supported by the Sugita theorem. By using the
cTPQ state formulation \cite{Sugiura_Shimizu}, a universal feature of typical R\'enyi entropies at a finite temperature has been derived, and it is shown that the universal behavior is reproduced for eigenstates in a non-integrable system but not in an integrable system \cite {Fujita_Nakagawa_Sugiura_Watanabe}. In \cite{Vidmar_Hackl_Bianchi_Rigol}, it is shown that the average of entanglement entropy over the eigenstates in a translationally invariant system with quadratic fermionic Hamiltonian is different from the original result of the LLPP theorem. All these results  reveal the typical feature of
entanglement in an isolated equilibrium system in the sense that the total system is assumed to be in a pure state randomly chosen according to an ensemble.

Entanglement in non-equilibrium macroscopic systems is also of interest. In the context of quantum channel capacity, whose asymptotic theory essentially deals with a macroscopic system, it is known that entanglement can enhance the transmission rate for noisy channels \cite{Bennett_Shor_Smolin_Thapliyal}. 
Calculations on the time evolution of the entanglement entropy in
globally quenched systems show that there is a maximum speed of
propagation of the signal \cite{Calabrese_Cardy}, as expected from the causality. 
In \cite{Fujita_Nakagawa_Sugiura_Watanabe}, it has been checked that the
time evolution of the second R\'enyi entropy in a quenched system is well
fitted by the universal behavior derived in the paper for both a
non-integrable and an integrable system. In the context of the AdS/CFT
correspondence \cite{Maldacena}, it has been shown that the holographic
entanglement entropy for a very small subsystem has the first law-like
relationship for static and translationally invariant excited states
\cite{Bhattacharya_Nozaki_Takayanagi_Ugajin}. This result is extended to
the entanglement for spherical subsystems in a time-dependent excited
state \cite{Nozaki_Numasawa_Prudenziati_Takayanagi}. In
\cite{Fan_Zhang_Shen_Zhai}, it has been proven that the time evolution
of the second R\'enyi entropy in a quenched system is related with the
out-of-order correlation, which is important in the analysis of quantum
chaos. Besides these theoretical investigations, the second R\'enyi
entropy has been detected in recent ultracold atom experiments
\cite{EE_exp1,EE_exp2}, suggesting that the time evolution of macroscopic
entanglement may also be measured in future experiments.

In modern arguments on resolution scenarios for the black hole
information loss paradox, time evolution of entanglement between the
black hole and the Hawking radiation plays a significant role. In the
semi-classical approximation, it is shown that there remains the Hawking
radiation in a mixed state after the complete evaporation of a black
hole \cite{Hawking_BHIP}. Considering a black hole formed by a quantum
field in a pure state, it is often argued that this evaporation process
indicates the loss of quantum information and/or the violation of
unitarity. The mixedness of the Hawking radiation can be understood by
the entanglement between the Hawking radiation and the black hole. In
the semi-classical approximation, it is known that there is a partner
mode inside the black hole which purifies each Hawking mode
\cite{Hotta_Schutzhold_Unruh}, meaning that the entanglement
monotonically increases in time. This entanglement makes the Hawking
radiation mixed. If the quantum gravity theory is unitary, there must be the purification partner of the Hawking radiation. Possible known candidates for the partner are the Hawking radiation itself \cite{Page_curve,Almheiri_Marolf_Polchinski_Sully,Almheiri_Marolf_Polchinski_Stanford_Sully}, remnants \cite{Aharonov_Casher_Nussinov}, zero-point fluctuation \cite{Wilczek,Hotta_Schutzhold_Unruh}, degrees of freedom in baby universe \cite{Unruh_Wald} and soft hairs \cite{Hawking_Perry_Strominger}.  

In the pioneering work done by Page \cite{Page_curve}, it is conjectured
that a time evolution of the entanglement entropy between a black hole
and the Hawking radiation is given by the typical behavior derived by
using the LLPP theorem. Under the assumption that the total system is
always in a random pure state, and that the evaporation process is
described by the change in the dimensions of the Hilbert spaces for the
Hawking radiation and the black hole, the typical entanglement entropy
is given by the thermal entropy of the smaller subsystem. This evolution
curve of entanglement is called the Page curve. One interesting
aspect of its prediction is that it predicts that the entanglement
entropy after the half-evaporation time, called the Page time,
is given by the Bekenstein--Hawking entropy \cite{Bekenstein,
Hawking_rad}. It is difficult to check the validity of the conjecture in
a realistic black hole evaporation process since the quantum gravity
theory has not been established completely. For black holes with
negative heat capacity,  the composite system of the black hole and the
Hawking radiation has an instability \cite{Hawking_instability}, meaning
that the Page curve conjecture seems to be doubtful since the total
system cannot be in a thermal equilibrium. In an analysis on a black
hole evaporation qubit model with negative heat capacity
\cite{Hotta_Nambu_Yamaguchi}, it is pointed out that the emission of
soft hair may make the entanglement
larger than the Page curve, which would avoid the emergence of the firewall
\cite{Almheiri_Marolf_Polchinski_Sully,Almheiri_Marolf_Polchinski_Stanford_Sully}. It
should be noted that the Page curve conjecture does not rely on any
detailed dynamics of black holes. Thus, an analysis in a condensed
matter system has an implication for the validity of the conjecture.

The main objective of this paper is not to discuss the validity of the
Page curve in black hole evaporation processes but to investigate a
typical entanglement in a dynamical process from a general
perspective. In a macroscopic system, it is difficult to calculate the
time evolution of the entanglement exactly, even when one knows the
initial state. Rather, the success of the equilibrium statistical
mechanics suggests that typical features have fundamental importance. We
prepare the initial state as a pure state randomly chosen from a subset
of the total Hilbert space. The system interacts with the environment
system and starts to share entanglement. We derive a general formula
for the average and the standard deviation of the trace of the $m$-th
power of the state to investigate the typicality of $m$-th R\'enyi
entropy. Once one could show that the ratio of the standard deviation to
the average is exponentially small as the system size becomes large, Chebyshev's inequality ensures the probability of getting an atypical
value is also exponentially small, meaning that we practically observe
the average value in a macroscopic system. Although it is quite
difficult to prove the typicality for general dynamics, we provide two
cases where the typicality of the second R\'enyi entropy can be proven
analytically. The first one is an energy dissipation process in a
multiple-qubit system, which is similar to the situation in the Page
curve conjecture. We show that at the very last stage of the dissipation
process, the typical value behaves similar to the R\'enyi entropy
calculated by the Gibbs state. This is the ``first proof'' of the Page curve
conjecture as a dynamical process. In the second case, we investigate
the typicality for a multiple-qudit system under the assumption that the
system is initially in a pure state distributed over the whole Hilbert
space. It is shown that the second R\'enyi entropy has a typical
behavior for any process composed of a tensor product of a single-qudit
quantum channel, meaning that the typicality of entanglement is not a
specific feature of an energy dissipation process.

This paper is organized as follows. In Section
\ref{sec_average_variance}, we derive a formula for the average and
the standard deviation for the trace of $m$-th power of state. In addition, a general
magnitude relationship between them is proven. In Section
\ref{sec_proof_typicality}, we analytically prove the typicality of the
second R\'enyi entropy for two kinds of cases. In Section
\ref{sec_conclusion}, we summarize our conclusions. 

\section{Ensemble average and Variance}\label{sec_average_variance}
Let us consider a system $A$ in a pure state
$\ket{\Psi}_A\in\mathcal{H}_A$ at the initial time. When the system $A$
interacts with the environment system $E$, these two subsystems start to
share entanglement. The most general time evolution of the system $A$
can be described by a quantum channel $\mathcal{E}_A$, and it will
evolve into
$\rho_A\equiv\mathcal{E}_A\left(\ket{\Psi}_A\bra{\Psi}_A\right)$. In
general, the output state $\rho_A$ can be a state for a system different
from $A$. Just for simplicity, let us assume the output system is the
same as the original one. It is easy to extend the results below for a nontrivial output system. Entanglement between the system $A$ and its environment can be quantified by using the $m$-th R\'enyi entropy $R_m$ defined by
\begin{align}
 R_m\equiv \frac{1}{1-m}\ln{\left(\mathrm{Tr}\left(\rho_A^m\right)\right)}.
\end{align}
R\'enyi entropy is related with the entanglement entropy $S\sub{EE}\equiv- \mathrm{Tr}\left(\rho_A\ln{\rho_A}\right)=\lim_{m\to 1}R_m$.

In order to investigate the typicality of entanglement, let us consider
an ensemble of the initial state $\ket{\Psi}_A$ which is distributed
according to the Haar measure of the unitary group on a sub-Hilbert
space $\mathcal{H}_S\subset \mathcal{H}_A$. The most important example
of ensemble is the microcanonical ensemble, where the state is chosen
uniformly from an energy shell. Introducing an orthonormal basis of the
$\mathcal{H}_S$ as $\{\ket{i}_A\}_{i=1}^{d_S}$, a Haar-random state can
be expanded as $\ket{\Psi}_A=\sum_{i=1}^{d_S}X_i
\ket{i}_A\in\mathcal{H}_S$, where $d_S\equiv\dim{\mathcal{H}_S}$ and
$\{X_i\}$ are random coefficients. The detailed definition of $\{X_i\}$
is given in Appendix \ref{ensemble}. Then, 
\begin{align}
 \rho_A(\{X_i\})=\sum_{i,j=1}^{d_S}X_iX_j^*\mathcal{E}_A\left(\ket{i}_A\bra{j}_A\right)
\end{align}
describes the state evolved by $\mathcal{E}_A$.
The ensemble average of $R_m$ is given by
\begin{align}
 \overline{R_m}&=\frac{1}{1-m}\overline{\ln{\left(\mathrm{Tr}\left(\rho_A\left(\{X_i\}\right)^m\right)\right)}}\nonumber\\
 &=\frac{1}{1-m}\overline{\ln{\left(\mathrm{Tr}\left(\prod_{k=1}^m \sum_{i^{(k)},j^{(k)}=1}^{d_S}X_{i^{(k)}}X_{j^{(k)}}^*\mathcal{E}_A\left(\ket{i^{(k)}}\bra{j^{(k)}}\right)\right)\right)}},
\end{align}
where the overline denotes the integration over the Haar measure.
It is difficult to calculate the average of logarithm of random coefficients ${X_i}$ directly. Instead, let us use the following quantity:
\begin{align}
 \tilde{R}_m\equiv\frac{1}{1-m}\ln{\left(\overline{\mathrm{Tr}\left(\rho_A\left(\{X_i\}\right)^m\right)}\right)},
\end{align}
which is obtained by calculating the average of the trace of $m$-th power of $\rho\left(\left\{X_i\right\}\right)$:
\begin{align}
 \overline{\mathrm{Tr}\left(\rho_A\left(\{X_i\}\right)^m\right)}=\overline{\prod_{k=1}^m \sum_{i^{(k)},j^{(k)}=1}^{d_S}X_{i^{(k)}}X_{j^{(k)}}^*}\mathrm{Tr}\left(\prod_{l=1}^m\mathcal{E}_A\left(\ket{i^{(l)}}\bra{j^{(l)}}\right)\right).
\end{align}
If the standard deviation divided by the average of
$\mathrm{Tr}\left(\rho_A\left(\{X_i\}\right)^m\right)$ is exponentially
small in $N$, then the typical value $\overline{R_m}$ is approximately
given by $\tilde{R}_m$. A similar argument can also be found in \cite{Fujita_Nakagawa_Sugiura_Watanabe}.

As is derived in Appendix \ref{ensemble}, 
\begin{align}
 \overline{\prod_{k=1}^m X_{i^{(k)}}X_{j^{(k)}}^*}=\frac{\Gamma(d_S)}{\Gamma(d_S+m)}\sum_{\tau\in S_m}\prod_{k=1}^m \delta_{j^{(k)},i^{(\tau(k))}},
\end{align}
where $S_m$ denotes the symmetric group of degree $m$. By using this
formula, the average and the variance of
$\mathrm{Tr}\left(\rho_A\left(\{X_i\}\right)^m\right)$ for arbitrary
positive integer $m$ can be obtained, although they are complicated. The average is given by
\begin{align}
 \overline{\mathrm{Tr}\left(\rho_A\left(\{X_i\}\right)^m\right)}=\frac{\Gamma(d_S)}{\Gamma(d_S+m)}\sum_{\tau\in S_m}\sum_{i^{(1)},\cdots, i^{(m)}=1}^{d_S}\mathrm{Tr}\left(\prod_{k=1}^m\mathcal{E}_A\left(\ket{i^{(k)}}\bra{i^{(\tau(k))}}\right)\right).\label{eq_ave}
\end{align}
On the other hand, 
\begin{align}
 & \overline{\left(\mathrm{Tr}\left(\rho_A\left(\{X_i\}\right)^m\right)\right)^2}\nonumber\\
&=\frac{\Gamma(d_S)}{\Gamma(d_S+2m)}\sum_{\tau\in S_{2m}}\sum_{i^{(1)},\cdots,i^{(2m)}=1}^{d_S}\mathrm{Tr}\left(\prod_{k=1}^m\mathcal{E}_A\left(\ket{i^{(k)}}\bra{i^{(\tau(k))}}\right)\right)\mathrm{Tr}\left(\prod_{l=m+1}^{2m}\mathcal{E}_A\left(\ket{i^{(l)}}\bra{i^{(\tau(l))}}\right)\right).\label{eq_sq_ave}
\end{align}
$S_{2m}$ contains a subset $S_m\times S_m$ which is composed of a product of elements in $S_m$ in the sense that
\begin{align}
 \begin{pmatrix}
  1&2&\cdots&m &m+1&m+2&\cdots&2m\\
  \tau_1(1)&\tau_1(2)&\cdots&\tau_1(m) &m+\tau_2(1)&m+\tau_2(2)&\cdots&m+\tau_2(m)
 \end{pmatrix}
\in S_{2m},
\end{align}
where $\tau_1,\tau_2\in S_m$. Thus, the ratio of Eq~(\ref{eq_sq_ave}) to the square of Eq~(\ref{eq_ave}) is given by a product of
\begin{align}
 \frac{\frac{\Gamma(d_S)}{\Gamma(d_S+2m)}}{\left(\frac{\Gamma(d_S)}{\Gamma(d_S+m-1)}\right)^2}=\frac{d_S(d_S+1)\cdots(d_S+m-1)}{(d_S+m)(d_S+m+1)\cdots (d_S+2m-1)}
\end{align}
and 
\begin{align}
 1+\frac{\sum_{\tau\in S_{2m}\backslash S_m\times S_m}\sum_{i^{(1)},\cdots,i^{(2m)}=1}^{d_S}\mathrm{Tr}\left(\prod_{k=1}^m\mathcal{E}_A\left(\ket{i^{(k)}}\bra{i^{(\tau(k))}}\right)\right)\mathrm{Tr}\left(\prod_{l=m+1}^{2m}\mathcal{E}_A\left(\ket{i^{(l)}}\bra{i^{(\tau(l))}}\right)\right)}{\left(\sum_{\tau\in S_m}\sum_{i^{(1)},\cdots, i^{(m)}=1}^{d_S}\mathrm{Tr}\left(\prod_{k=1}^m\mathcal{E}_A\left(\ket{i^{(k)}}\bra{i^{(\tau(k))}}\right)\right)\right)^2}.\label{eq_second_factor}
\end{align}
If $d_S$ grows exponentially fast as the system size becomes large, the
first factor is given by $1$ plus an exponentially small term. For
example, if we take $\mathcal{H}_S$ as an Hilbert space spanned by
energy eigenstates in an energy shell, this condition is satisfied for a
normal thermodynamic system. 
Since the variance is given by
\begin{align}
 \overline{\left(\mathrm{Tr}\left(\rho_A\left(\{X_i\}\right)\right)-\overline{\mathrm{Tr}\left(\rho_A\left(\{X_i\}\right)\right)}\right)^2}= \overline{\left(\mathrm{Tr}\left(\rho_A\left(\{X_i\}\right)^2\right)\right)^2}-\left(  \overline{\mathrm{Tr}\left(\rho_A\left(\{X_i\}\right)^2\right)}\right)^2,
\end{align}
if the second term in Eq~(\ref{eq_second_factor}) is exponentially small, the ratio of the standard deviation to the average is also exponentially small, meaning that the R\'enyi entropy $R_m$ has a typical behavior.

Hereafter, let us only investigate the lowest order R\'enyi entropy $R_2$. In fact, for $m=2$, there is a useful expression for the average and the variance. From Eq.~(\ref{eq_ave}), the average value is given by
\begin{align}
 \overline{\mathrm{Tr}\left(\rho_A\left(\{X_i\}\right)^2\right)}=\frac{1}{d_S(d_S+1)}\sum_{\tau\in S_2}\sum_{i^{(1)},i^{(2)}=1}^{d_S}\mathrm{Tr}\left(\mathcal{E}_A\left(\ket{i^{(1)}}\bra{i^{(\tau(1))}}\right)\mathcal{E}_A\left(\ket{i^{(2)}}\bra{i^{(\tau(2))}}\right)\right).
\end{align}
Define an orthonormal basis for the set of traceless Hermite operators on $\mathcal{H}_A$ as $\{\sigma_\mu\}_{\mu=1}^{d_A^2-1}$ satisfying $\mathrm{Tr}\left(\sigma_\mu^\dag\sigma_\nu\right)=\delta_{\mu,\nu}$. Any linear operator $L$ on $\mathcal{H}_A$ can be expanded as
\begin{align}
 L=\sum_{\mu=0}^{d_A^2-1}\mathrm{Tr}\left(\sigma_\mu L\right)\sigma_\mu,
\end{align}
where we have defined $\sigma_0\equiv\frac{1}{\sqrt{d_A}}\mathbb{I}_A$ and $\mathbb{I}_A$ is the identity operator on the Hilbert space $\mathcal{H}_A$. By using a set of Kraus operators $\{K_i\}$ for the quantum channel $\mathcal{E}_A$, let us introduce a set of Hermite operators $\{Q_\mu\}_{\mu=0}^{d_A^2-1}$ as
\begin{align}
 Q_\mu \equiv \sum_{i}K_i^\dag \sigma_\mu K_i.
\end{align}
Since
\begin{align}
 \mathcal{E}_A\left(\ket{i}\bra{j}\right)=\sum_{\mu=0}^{d_A^2-1}\mathrm{Tr}\left(\mathcal{E}\left(\ket{i}\bra{j}\right)\sigma_\mu\right)\sigma_\mu=\sum_{\mu=0}^{d_A^2-1}\braket{j|Q_\mu|i}\sigma_\mu,
\end{align}
we get
\begin{align}
 \mathrm{Tr}\left(\mathcal{E}_A\left(\ket{i^{(1)}}\bra{i^{(\tau(1))}}\right)\mathcal{E}_A\left(\ket{i^{(2)}}\bra{i^{(\tau(2))}}\right)\right)=\sum_{\mu=0}^{d_A^2-1}\braket{i^{(\tau(1))}|Q_\mu|i^{(1)}}\braket{i^{(\tau(2))}|Q_\mu|i^{(2)}}.
\end{align}
Therefore, 
\begin{align}
  \overline{\mathrm{Tr}\left(\rho_A\left(\{X_i\}\right)^2\right)}&=\frac{1}{d_S(d_S+1)}\left(\alpha_1+\alpha_2\right) 
\end{align}
where we have defined $\alpha_1\equiv\sum_{\mu=0}^{d_A^2-1}\mathrm{Tr}\left(R_\mu\right)^2$, $\alpha_2\equiv\sum_{\mu=0}^{d_A^2-1}\mathrm{Tr}\left(R_\mu^2\right)$, $R_\mu\equiv P Q_\mu P$ and $P\equiv \sum_{i=1}^{d_S}\ket{i}_A\bra{i}_A$ is the projection operator onto the sub-Hilbert space $\mathcal{H}_S$. In a very similar way, a straightforward calculation shows that
\begin{align}
 \overline{\left(\mathrm{Tr}\left(\rho_A\left(\{X_i\}\right)^2\right)\right)^2}=\frac{1}{d_S(d_S+1)(d_S+2)(d_S+3)}\left(\left(\alpha_1+\alpha_2\right)^2+2\beta_1+4\beta_2+8\beta_3+4\beta_4+2\beta_5\right),
\end{align}
where
\begin{equation}
\begin{split}
  \beta_1&\equiv \sum_{\mu=0}^{d_A^2-1}\sum_{\nu=0}^{d_A^2-1}\left(\mathrm{Tr}\left(R_\mu R_\nu\right)\right)^2,\\
\beta_2&\equiv\sum_{\mu=0}^{d_A^2-1}\sum_{\nu=0}^{d_A^2-1}\mathrm{Tr}\left(R_\mu R_\nu\right)\mathrm{Tr}\left(R_\mu\right)\mathrm{Tr}\left(R_\nu\right),\\
 \beta_3&\equiv \sum_{\mu=0}^{d_A^2-1}\sum_{\nu=0}^{d_A^2-1} \mathrm{Tr}\left(R_\mu R_\nu^2\right)\mathrm{Tr}\left(R_\mu\right),\\
 \beta_4&\equiv \sum_{\mu=0}^{d_A^2-1}\sum_{\nu=0}^{d_A^2-1}\mathrm{Tr}\left(R_\mu^2 R_\nu^2\right), \\
 \beta_5&\equiv  \sum_{\mu=0}^{d_A^2-1}\sum_{\nu=0}^{d_A^2-1}\mathrm{Tr}\left(R_\mu R_\nu R_\mu R_\nu\right). 
\end{split}
\end{equation}
To consider the thermodynamic limit, let us assume $\mathcal{H}_A$ is
composed of $N(\gg 1)$ copies of a small subsystem. For a normal thermodynamic system $d_S$ grows exponentially fast in $N$.  
If the ratio of $\max_{u=1,\cdots, 5}\{|\beta_u|\}$ to
$\max_{i,j=1,2}\{|\alpha_i\alpha_j|\}$ is exponentially small in $N$,
then one can conclude that the R\'enyi entropy has a typical behavior which is approximately given by
\begin{align}
 \overline{R_2}\approx -\ln{\left(\frac{1}{d_S(d_S+1)}(\alpha_1+\alpha_2)\right)}
\end{align}
as the system size become large, i.e., $N\to\infty$. 

For the magnitude relationships between $\max_{u=1,\cdots, 5}\{|\beta_u|\}$ and $\max_{i,j=1,2}\{|\alpha_i\alpha_j|\}$, the following theorem holds:
\begin{theorem}\label{thm}
 For any quantum channel $\mathcal{E}_A$ and the sub-Hilbert space
 $\mathcal{H}_S\subset \mathcal{H}_A$,
\begin{equation}
\max_{u=1,\cdots, 5}\{|\beta_u|\}<\max_{i,j=1,2}\{|\alpha_i\alpha_j|\}
\end{equation}
holds if $\dim{\mathcal{H}_S}>1$.
\end{theorem}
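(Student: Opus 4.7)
The plan is to bound each $\beta_u$ by a combination of $\alpha_1$ and $\alpha_2$ using standard operator inequalities, and then to extract strict inequality from the structural fact that the trace-preserving condition of $\mathcal{E}_A$ forces $R_0=P/\sqrt{d_A}$.

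First I would rewrite the quantities in streamlined form. Let $v\in\mathbb{R}^{d_A^2}$ have components $v_\mu\equiv\mathrm{Tr}(R_\mu)$, let $M$ be the positive semidefinite Gram matrix $M_{\mu\nu}\equiv\mathrm{Tr}(R_\mu R_\nu)$, and let $V\equiv\sum_\mu R_\mu^2$, a positive semidefinite operator on $\mathcal{H}_S$. Then $\alpha_1=\|v\|^2$, $\alpha_2=\mathrm{Tr}(M)=\mathrm{Tr}(V)$, $\beta_1=\mathrm{Tr}(M^2)$, $\beta_2=v^T M v$, $\beta_3=\sum_\mu v_\mu\mathrm{Tr}(R_\mu V)$, and $\beta_4=\mathrm{Tr}(V^2)$. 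The CPTP condition $\sum_i K_i^\dag K_i=\mathbb{I}_A$ gives $Q_0=\mathbb{I}_A/\sqrt{d_A}$ and hence $R_0=P/\sqrt{d_A}$, which will be the critical ingredient for strictness.

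Then I would establish the non-strict bounds: (i) $\beta_1\leq\alpha_2^2$ from $\mathrm{Tr}(M^2)\leq(\mathrm{Tr} M)^2$ for PSD $M$; (ii) $\beta_2\leq\lambda_{\max}(M)\|v\|^2\leq\alpha_1\alpha_2$; (iii) $\beta_4\leq\alpha_2^2$ by the same PSD inequality applied to $V$; (iv) $\beta_5\leq\beta_4$ by summing the Hermitian commutator inequality $\mathrm{Tr}(R_\mu R_\nu R_\mu R_\nu)\leq\mathrm{Tr}(R_\mu^2 R_\nu^2)$, which is equivalent to $\mathrm{Tr}([R_\mu,R_\nu]^\dag[R_\mu,R_\nu])\geq 0$; and (v) $|\beta_3|\leq\sqrt{\alpha_1\alpha_2\beta_4}$ by two successive Cauchy--Schwarz steps, first in the index $\mu$ and then in the Hilbert--Schmidt inner product.

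Strictness for $\beta_3$, $\beta_4$, and $\beta_5$ is then immediate, because the operator inequality $V\geq R_0^2=P/d_A$ implies $V$ has rank at least $d_S$; with $d_S>1$ this makes $\mathrm{Tr}(V^2)<(\mathrm{Tr} V)^2$ strict, so $\beta_4<\alpha_2^2$, $\beta_5\leq\beta_4<\alpha_2^2$, and $|\beta_3|<\alpha_2\sqrt{\alpha_1\alpha_2}$; an elementary case split on the sign of $\alpha_1-\alpha_2$ then bounds the last quantity by $\max(\alpha_1,\alpha_2)^2$. The main obstacle will be $\beta_1$ and $\beta_2$, since the completely depolarizing channel already saturates (i) and (ii) with $\beta_1=\alpha_2^2$ and $\beta_2=\alpha_1\alpha_2$. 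My idea to resolve this is to track the equality case: saturation in (i) or (ii) forces $M$ to have rank at most one, so every $R_\mu$ must be a scalar multiple of $R_0\propto P$; writing $R_\mu=c_\mu P/\sqrt{d_A}$ and computing directly then yields $\alpha_1=d_S\alpha_2$. Since $d_S>1$ this gives $\alpha_1>\alpha_2$, so the relevant maximum $\alpha_1^2=d_S^2\alpha_2^2$ strictly dominates both saturated values. A final case split on whether $\alpha_1\leq\alpha_2$ or $\alpha_1>\alpha_2$ assembles the five bounds into $\max_u|\beta_u|<\max_{i,j}|\alpha_i\alpha_j|$.
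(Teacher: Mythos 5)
Your overall strategy is sound and all but one of your bounds go through, but there is a concrete omission in your handling of $\beta_5$: the commutator inequality $\mathrm{Tr}\bigl([R_\mu,R_\nu]^\dag[R_\mu,R_\nu]\bigr)\geq 0$ only yields the \emph{upper} bound $\beta_5\leq\beta_4$, whereas the theorem requires a bound on $|\beta_5|$, and $\beta_5$ can be negative (for a pair of anticommuting Hermitian operators one has $\mathrm{Tr}(R_\mu R_\nu R_\mu R_\nu)=-\mathrm{Tr}(R_\mu^2R_\nu^2)$, so the off-diagonal terms can drive the sum below zero). You need the companion lower bound $\beta_5\geq-\beta_4$, which follows from $\mathrm{Tr}\bigl(\{R_\mu,R_\nu\}^2\bigr)=2\,\mathrm{Tr}(R_\mu R_\nu R_\mu R_\nu)+2\,\mathrm{Tr}(R_\mu^2R_\nu^2)\geq 0$; together these give $|\beta_5|\leq\beta_4$, and your rank argument for $V$ then closes the case. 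This is precisely how the paper handles $\beta_5$ (its proof of Theorem~3 uses both the commutator and the anticommutator identities), so the fix is one line, but as written your claim ``$\beta_5\leq\beta_4<\alpha_2^2$'' does not establish $|\beta_5|<\max_{i,j}|\alpha_i\alpha_j|$.

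Apart from that, your route is a genuine and somewhat cleaner repackaging of the paper's. The paper proves the non-strict inequality for an \emph{arbitrary} family of Hermitian operators (its Theorem~3), carefully tracking the equality condition of every one of the five bounds to conclude that saturation forces $R_\mu=r_\mu A$ with $\mathrm{Tr}(A^2)\geq\mathrm{Tr}(A)^2$, and only then invokes $R_0=P/\sqrt{d_A}$ to rule this out for $d_S>1$. You instead inject the channel structure earlier: the observation that $V=\sum_\mu R_\mu^2\geq R_0^2=P/d_A$ forces $\mathrm{rank}\,V\geq d_S\geq 2$ gives you \emph{unconditional} strictness of $\beta_4<\alpha_2^2$ (and hence of the $\beta_3$ and $\beta_5$ bounds) without any equality-case analysis, and only $\beta_1$ and $\beta_2$ require the dichotomy ``either the bound is strict, or $M$ has rank one and then $\alpha_1=d_S\alpha_2$ makes the maximum jump to $d_S^2\alpha_2^2$.'' Your spectral inequalities $\mathrm{Tr}(M^2)\leq(\mathrm{Tr}\,M)^2$ and $v^TMv\leq\mathrm{Tr}(M)\|v\|^2$ for the Gram matrix are the paper's Cauchy--Schwarz arguments and its Lemma~2 in disguise, so the ingredients coincide; what your organization buys is a shorter strictness argument, at the cost of being tied to the specific family $\{R_\mu\}$ rather than yielding the paper's standalone equality characterization. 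Your correct identification of the completely depolarizing channel as the obstruction to making $\beta_1\leq\alpha_2^2$ and $\beta_2\leq\alpha_1\alpha_2$ strict shows you located exactly where the equality analysis is unavoidable.
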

The proof is given in Appendix \ref{thm_proof}. It should be noted that if $\dim{\mathcal{H}_S}=1$, then the variance is exactly zero.
Thus, the exception in the theorem is not significant here. Although
this theorem cannot be used to directly evaluate the asymptotic behavior
of the ratio of $\max_{u=1,\cdots, 5}\{|\beta_u|\}$ to
$\max_{i,j=1,2}\{|\alpha_i\alpha_j|\}$, it suggests that there is a
typical second R\'enyi entropy when they have different exponential behaviors.

The results presented in this section hold for any quantum channel
$\mathcal{E}_A$ and ensemble defined by using the Haar measure for the
unitary group on a set
of initial pure states. For the microcanonical distribution for initial
states in a normal thermodynamical system, $d_S$ grows exponentially
in $N$. On the other hand, it is generally difficult to
analytically evaluate the asymptotic behaviors of $\alpha$s and $\beta$s
since $\{R_\mu\}$ will be complicated. In the rest of this paper, we restrict ourself to the
cases where the dynamics is described by a multiple-product of an
independent and identical channel $\mathcal{E}$, i.e., $\mathcal{E}_A=\mathcal{E}^{\otimes N}$. This assumption significantly simplifies the calculation of $\alpha$s and $\beta$s. In the following section, we analytically prove the typicality in two cases for this class of channels.

\section{Typicality of the second R\'enyi entropy} \label{sec_proof_typicality}
\subsection{Energy dissipation process and the Page curve conjecture}\label{dissipation}
The Page curve conjecture originally stems from the black hole
evaporation process where a black hole loses its energy by emitting the
Hawking radiation. After the Page time, the typical entanglement entropy
is given by the thermal entropy of the emitter, i.e., the black hole.
This work is fascinating because the result is universal in the sense
that the typical entanglement does not seem to depend on the details of
the dynamics nor the Hilbert space of the black hole. Therefore, it is
sometimes believed that the evolution of the entanglement entropy in
black hole evaporation processes in the quantum gravity theory follows
this conjecture. However, the dynamics assumed in the conjecture seems
to be too simple and ambiguous since we usually do not regard the change in the dimensions of the Hilbert spaces as an energy dissipation process.

In this subsection, we prove the typicality of the second R\'enyi entropy in multiple-qubit system for an energy dissipation process by calculating $\alpha$s and $\beta$s for the amplitude damping channel. We do not claim that it describes the black hole evaporation process. Nevertheless, it would be interesting to investigate the relationships between the asymptotic behavior of the typical R\'enyi entropy and the Page curve conjecture, since the conjecture does not rely on detailed dynamics of the black hole. The result shows that the typical R\'enyi entropy appears to be in line with the Page curve conjecture in the very last stage of the evaporation, in the sense that the asymptotic behavior gives the same value as the second R\'enyi entropy calculated by the Gibbs state.

Consider an $N$ copies of a qubit system whose free Hamiltonian is given by
\begin{align}
 H=\omega \ket{+}\bra{+},
\end{align}
where $\omega>0$ is a positive constant. In this set up, the
microcanonical ensemble specified by an energy shell $[E_0,E_0+\Delta)$ is characterized by the Hilbert space $\mathcal{H}_S$, which is spanned by tensor product vectors of $K\equiv E_0/\omega$ $\ket{+}$ states and $N-K$ $\ket{-}$ states, where $\{\ket{+},\ket{-}\}$ is a set of orthonormal basis for a qubit. Here we have assumed $K$ is a positive integer and $\Delta/\omega\in(0,1)$ for simplicity. The thermodynamic limit is given by $N\to \infty$ with $u\equiv \frac{K}{N}$ fixed. Then, $d_S=\binom{N}{K}$ is exponentially large in $N$ and
\begin{align}
 P=K!\frac{d^K}{dx^K}\bigotimes_{n=1}^N\left(x \ket{+}_n\bra{+}_n+\ket{-}_n\bra{-}_n\right)\Bigg|_{x=0}.
\end{align} 

As a simplest energy dissipation process, we assume each single qubit evolves through the amplitude damping channel $\mathcal{E}$ characterized by the Kraus operators
\begin{align}
 K_1=\ket{-}\bra{-}+\sqrt{1-r}\ket{+}\bra{+},\quad K_2= \sqrt{r}\ket{-}\bra{+},
\end{align}
where $r\in[0,1]$ describes the probability of emission.
Then, the dynamics of the total qubit system is described by $\mathcal{E}_A=\mathcal{E}^{\otimes N}$. Rescaled Pauli matrices for a single qubit defined by
\begin{equation}
\begin{split}
 \sigma_0^{(1\text{-qubit})}&\equiv\frac{1}{\sqrt{2}}\left(\ket{+}\bra{+}+\ket{-}\bra{-}\right)\\
 \sigma_1^{(1\text{-qubit})}& \equiv \frac{1}{\sqrt{2}}\left(\ket{+}\bra{-}+\ket{-}\bra{+}\right)\\
 \sigma_2^{(1\text{-qubit})}&\equiv \frac{i}{\sqrt{2}}\left(-\ket{+}\bra{-}+\ket{-}\bra{+}\right) \\
 \sigma_3^{(1\text{-qubit})}&\equiv \frac{1}{\sqrt{2}}\left(\ket{+}\bra{+}-\ket{-}\bra{-}\right) 
\end{split}
\end{equation}
can be used to construct an orthonormal basis of the set of linear operators as
\begin{align}
 \sigma_{\mu}=\bigotimes_{n=1}^N\sigma_{\mu_n}^{(1\text{-qubit})},
\end{align}
where we have identified $\mu=0,1,\cdots 2^N-1$ and its $N$-length binary representation
$\mu=\mu_1\mu_2\cdots\mu_N$. As is easily checked, these operators satisfy the following normalization condition:
\begin{align}
 \mathrm{Tr}\left(\sigma_\mu\sigma_\nu\right)=\delta_{\mu,\nu}\equiv \prod_{n=1}^N\delta_{\mu_n,\nu_n}.
\end{align}
Then, the set of operator $Q_\mu$ is factorized as
\begin{align}
 Q_\mu=\bigotimes_{n=1}^NQ_{\mu_n}^{(1\text{-qubit})},
\end{align}
where
\begin{align}
 Q_{\mu_n}^{(1\text{-qubit})}\equiv \sum_{i=1}^2 K_i^\dag \sigma_{\mu_n}^{(1\text{-qubit})} K_i.
\end{align}
Therefore,
\begin{align}
 PQ_\mu=K!\frac{d^K}{dx^K}\bigotimes_{n=1}^N \tilde{Q}_{\mu_n}^{(1\text{-qubit})}(x)\big|_{x=0},
\end{align}
where
\begin{align}
 \tilde{Q}_{\mu_n}^{(1\text{-qubit})}(x)\equiv(x\ket{+}_n\bra{+}_n+\ket{-}_n\bra{-}_n)Q_{\mu_n}^{(1\text{-qubit})}
\end{align}

By using this expression, $\alpha_1,\alpha_2$ can be evaluated easily. For example,
\begin{align}
 \alpha_1&=\sum_{\mu}\mathrm{Tr}\left(R_\mu\right)^2\nonumber\\
 &=(K!)^2\frac{d^K}{dx_1^K}\frac{d^K}{dx_2^K}\left(\sum_{\mu=0}^3\mathrm{Tr}\left(\tilde{Q}_{\mu}^{(1\text{-qubit})}(x_1)\right) \mathrm{Tr}\left(\tilde{Q}_{\mu}^{(1\text{-qubit})}(x_2)\right)\right)^N \Bigg|_{x_1,x_2=0}.
\end{align}
Straightforward calculation shows
\begin{align}
 \sum_{\mu=0}^3\mathrm{Tr}\left(\tilde{Q}_{\mu}^{(1\text{-qubit})}(x_1)\right) \mathrm{Tr}\left(\tilde{Q}_{\mu}^{(1\text{-qubit})}(x_2)\right)=(r^2+(1-r)^2)x_1x_2 +r(x_1+x_2)+1,
\end{align}
and by reading off the coefficient of $x_1^Kx_2^K$ in $\left(
\sum_{\mu=0}^3\mathrm{Tr}\left(\tilde{Q}_{\mu}^{(1\text{-qubit})}(x_1)\right)
\mathrm{Tr}\left(\tilde{Q}_{\mu}^{(1\text{-qubit})}(x_2)\right)\right)^N$,
we get
\begin{align}
 \alpha_1&=\binom{N}{K}\sum_{k=0}^K\binom{K}{k}\binom{N-K}{K-k}(r^2+(1-r)^2)^k r^{2(K-k)}\label{alpha_1_def}\\
 &=\frac{N!} {K!^2(N-2K)!}r^{2K}{}_2 F_1\left(-K,-K;N-2K+1;\frac{r^2+(1-r)^2}{r^2}\right),
\end{align}
where ${}_2 F_1\left(a,b;c;z\right)$ is the hypergeometric function defined by $ {}_2 F_1\left(a,b;c;z\right)\equiv \sum_{k=0}^\infty\frac{(a)_k(b)_k}{(c_k)}\frac{z^k}{k!}$ with the rising Pochhammer symbol $(a)_k\equiv \Gamma(a+k)/\Gamma(a)$. Similarly, it is shown
\begin{align}
 \alpha_2=\frac{N!} {K!^2(N-2K)!} (1-r)^{2K} {}_2F_1\left(-K,-K;N-2K+1;\frac{r^2+(1-r)^2}{(1-r)^2}\right).
\end{align}
If we interchange $r$ into $1-r$, $\alpha_1$ turns into $\alpha_2$ and vice versa. As is seen from Eq.~(\ref{alpha_1_def}), $\alpha_1<\alpha_2$ for $r<1/2$, $\alpha_1=\alpha_2$ for $r=1/2$, and $\alpha_1>\alpha_2$ for $r>1/2$.  As is shown in Appendix \ref{asy_appendix}, the asymptotic behavior of the hypergeometric function ${}_2F_1\left(-Nu,-Nu;(1-2u)N+1;z\right)$ for large $N$ is given by $\exp{\left(Ng_u(z)\right)}$, where
\begin{align}
  g_u(z)&\equiv
 -\left(\frac{1}{2}-u\right)\ln{\left(\frac{2+\frac{4u(1-u)}{(1-2u)^2}z+2\sqrt{1+\frac{4u(1-u)}{(1-2u)^2}z}}{4}\right)}\nonumber\\
 &\qquad\qquad +\frac{1}{2}\ln{\left(\frac{2+\frac{4u(1-u)}{(1-2u)^2}(1+z)+\frac{2}{1-2u}\sqrt{1+\frac{4u(1-u)}{(1-2u)^2}z}}{4\left(\frac{1-u}{1-2u}\right)^2}\right)}.
\end{align}
Therefore, for large $N$, $\alpha$s behave like $\alpha_i\sim e^{Nh_i(r)}$, where
\begin{equation}
 h_i(r)\equiv
\begin{cases}
 -2u\ln{u}-(1-2u)\ln{(1-2u)}+2u\ln{r}+g_u\left(\frac{r^2+(1-r)^2}{r^2}\right)&\quad (i=1)\\
-2u\ln{u}-(1-2u)\ln{(1-2u)}+ 2u\ln{(1-r)}+g_u\left(\frac{r^2+(1-r)^2}{(1-r)^2}\right) &\quad (i=2)
\end{cases}.
\end{equation}
Fig.~\ref{asy_fig} shows the plots of $h_1(r)$ and $\frac{1}{N}\ln{\alpha_1}$ with different $N$s, which verifies the asymptotic behavior.
\begin{figure}[tbp]
 \includegraphics[width=7.5cm]{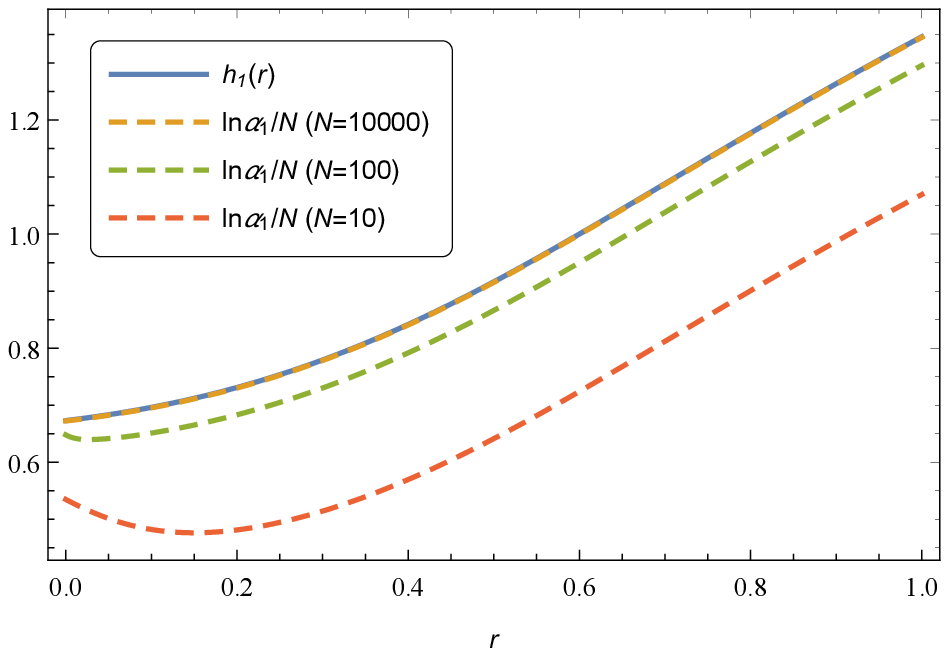}
\caption{The plots of functions for $u=0.4$, showing the validity of the asymptotic behavior. It is seen that as $N$ becomes large, $\ln{\alpha_1}/N\sim h_1$, meaning that $\alpha_1\sim e^{Nh_1}$. Since $\alpha_2(r)=\alpha_1(1-r)$, the asymptotic behavior of $\alpha_2$ is given by $e^{Nh_1(1-r)}=e^{Nh_2(r)}$.}\label{asy_fig}
\end{figure}
Fig.~\ref{hs} shows the behaviors of $h_1(r)$ and $h_2(u)$. From this figure, one can see that $\alpha_1\sim\alpha_2$ only when $r=1/2$ for $u\in(0,1/2)$.
\begin{figure}[tbp]
\includegraphics[width=7.5cm]{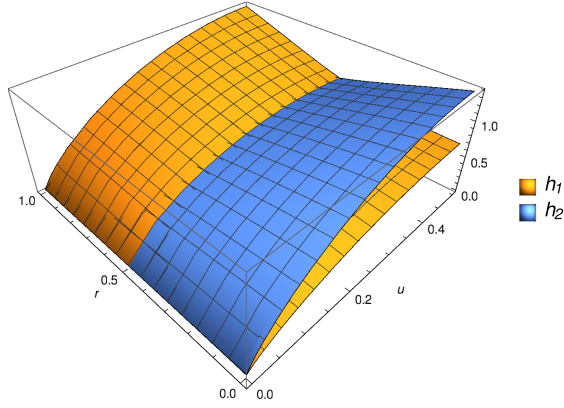}
\caption{The comparison of $h_1$ and $h_2$. This plot shows that
 $h_1=h_2$ only if $r=1/2$ for $u\in(0,1/2)$.}
\label{hs}
\end{figure}

$\beta$s are also obtained in the same way, and the results are
\begin{equation}
 \begin{split}
 \beta_1&=\sum_{\substack{k,l_1,l_2\geq
 0\\k+l_1+l_2=K}}\frac{N!}{l_1!^2l_2!^2k!(N-2K+k)!}(r^2+(1-r)^2)^{2k}r^{4l_1}(1-r)^{4l_2},\\
 \beta_2&=\frac{1}{d_S}\alpha_1^2, \\
 \beta_3&=\frac{1}{d_S}\alpha_1\alpha_2 ,\\
 \beta_4&=\frac{1}{d_S}\alpha_2^2 ,\\
 \beta_5&=\sum_{\substack{k,l_1,l_2\geq
 0\\k+l_1+l_2=K}}\frac{N!}{l_1!^2l_2!^2k!(N-2K+k)!}\left(r^2+(1-r)^2\right)^{2k}r^{2(l_1+l_2)}(1-r)^{2(l_1+l_2)}.  
 \end{split}
\end{equation}
In this form, it might seem to be difficult to compare $\beta_1$ and
$\beta_5$ with $\alpha_1^2$ and $\alpha_2^2$. However, as is derived in
Appendix \ref{thm_proof}, $|\beta_1|\leq| \alpha_2|^2$ and
$|\beta_5|\leq |\beta_4|$ hold for any channel. Furthermore, if we interchange $r$ into $1-r$, $\alpha_2$ turns into $\alpha_1$ while $\beta_1$ is invariant. Thus, $|\beta_1|\leq \min_{i,j=1,2}{\{|\alpha_i\alpha_j|\}}$.

Since the asymptotic behaviors of $\alpha$s coincide with each other
only when $r=1/2$, $|\beta_1|/\max_{i,j=1,2}{\{|\alpha_i\alpha_j|\}}$ is
exponentially small except for $r=1/2$. For $r=1/2$,
$\beta_1(1/2)=\beta_5(1/2)\leq \beta_4(1/2)=\alpha_2^2/d_S$.
Therefore, the ratio of $\max_{u=1,\cdots,5}\{{|\beta_u|}\}$ to
$\max_{i,j=1,2}{\{|\alpha_i\alpha_j|\}}$ is exponentially small for any
$r\in[0,1]$, meaning that we have proven the following theorem:
\begin{theorem}
 Consider an $N$-qubit system $A$, where the energy difference between
 the excited state and the ground state is the same for each qubit. Under the assumption that the system
 is initially in a random pure state in an energy shell and that each qubit
 evolves according to the amplitude damping channel, the second R\'enyi
 entropy between the system $A$ and its environment is typical when $N\to \infty$.
\end{theorem}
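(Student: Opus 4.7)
The plan is to verify the sufficient condition for typicality established in Section \ref{sec_average_variance}: if the ratio $\max_{u=1,\ldots,5}\{|\beta_u|\}/\max_{i,j=1,2}\{|\alpha_i\alpha_j|\}$ is exponentially small in $N$, then Chebyshev's inequality yields typicality of $R_2$. The explicit expressions for the $\alpha_i$ and $\beta_u$, the asymptotic rates $\alpha_i \sim e^{Nh_i(r)}$, and the appendix bounds $|\beta_1|\leq|\alpha_2|^2$ and $|\beta_5|\leq|\beta_4|$ are already in hand, so the task reduces to a short case analysis on $r$.

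First, I would dispose of the three $\beta_u$'s carrying an explicit $1/d_S$ prefactor. Since $\beta_2 = \alpha_1^2/d_S$, $\beta_3 = \alpha_1\alpha_2/d_S$, and $\beta_4 = \alpha_2^2/d_S$, each is trivially bounded by $\max_{i,j}|\alpha_i\alpha_j|/d_S$, and because $d_S = \binom{N}{K}$ is exponentially large when $u = K/N$ is fixed, the corresponding ratios are exponentially small. The bound $|\beta_5|\leq|\beta_4|$ inherits the same suppression.

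The central term is $\beta_1$. The closed-form sum defining $\beta_1$ is manifestly symmetric under $r\leftrightarrow 1-r$ (after relabeling $l_1\leftrightarrow l_2$), whereas that exchange swaps $\alpha_1$ and $\alpha_2$. Combining this symmetry with $|\beta_1|\leq|\alpha_2|^2$ yields $|\beta_1|\leq\min_{i,j}|\alpha_i\alpha_j|$. For any $r\neq 1/2$, Fig.~\ref{hs} establishes $h_1(r)\neq h_2(r)$, so $\alpha_1^2$ and $\alpha_2^2$ grow at distinct exponential rates, and the ratio of $\min$ to $\max$ of $|\alpha_i\alpha_j|$ is therefore exponentially small.

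The main obstacle is the single exceptional point $r=1/2$, where $\alpha_1=\alpha_2$ and the exponential-separation argument collapses. There I would read off directly from the closed-form sums that $r = 1-r = 1/2$ makes $r^{4l_1}(1-r)^{4l_2}$ and $r^{2(l_1+l_2)}(1-r)^{2(l_1+l_2)}$ coincide (both equal $(1/2)^{4(K-k)}$), hence $\beta_1(1/2)=\beta_5(1/2)$. Applying $|\beta_5|\leq|\beta_4|=\alpha_2^2/d_S$ then restores the $1/d_S$ suppression at $r=1/2$ as well. Joining the two regimes uniformly in $r\in[0,1]$ completes the verification of the criterion, and the typicality argument of Section \ref{sec_average_variance} delivers the theorem.
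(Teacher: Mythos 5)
Your proposal is correct and follows essentially the same route as the paper: dispose of $\beta_2,\beta_3,\beta_4$ (and hence $\beta_5$ via $|\beta_5|\leq|\beta_4|$) through the explicit $1/d_S$ factor, bound $\beta_1$ by $\min_{i,j}\{|\alpha_i\alpha_j|\}$ using the $r\leftrightarrow 1-r$ symmetry together with $|\beta_1|\leq\alpha_2^2$, invoke the distinct exponential rates $h_1(r)\neq h_2(r)$ for $r\neq 1/2$, and handle the exceptional point $r=1/2$ via $\beta_1(1/2)=\beta_5(1/2)\leq\beta_4(1/2)=\alpha_2^2/d_S$. This matches the paper's argument step for step, so no further comparison is needed.
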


It would be interesting to compare the typical value with the Page-like curve for the second R\'enyi entropy. The total energy for the $N$-qubit system is $E(r)=Nu\omega(1-r)$. The corresponding Gibbs state is given by
\begin{align}
 \rho\sub{Gibbs}(r)=\bigotimes_{n=1}^N\left(u(1-r)\ket{+}_n\bra{+}_n+\left(1-u(1-r)\right)\ket{-}_n\bra{-}_n\right).
\end{align}
The second R\'enyi entropy per qubit defined by
\begin{align}
 \frac{1}{N}R_{2}\sps{thermal}(r)\equiv-\frac{1}{N}\ln{\left(\mathrm{Tr}\left(\rho\sub{Gibbs}{}_{,u}(r)^2\right)\right)}=-\ln{ \left(u^2(1-r)^2+\left(1-u(1-r)\right)^2\right)}.
\end{align}
satisfies 
\begin{align}
  \frac{1}{N}R_{2}\sps{thermal}(r=1)=0,\quad \frac{d}{d r}\left(\frac{1}{N}R_{2}\sps{thermal}(r)\right)\Biggr|_{r=1}=-2u,
\end{align}
where $r=1$ is the complete evaporation point.
On the other hand,
\begin{align}
 \frac{1}{N}\tilde{R}_2(r=1)= 0,\quad\frac{d}{dr}\left(\frac{1}{N}\tilde{R}_2(r)\right)\Biggr|_{r=1}&\sim -\frac{d}{dr}h_1(r)\Biggr|_{r=1}=-2u. 
\end{align}
Therefore, at the very last stage of the dissipation process, the Page
curve-like behavior is reproduced. Fig.~\ref{renyi_fig} shows the
comparison of $R_2\sps{thermal}$ and $\tilde{R}_2$ for $r\in [0,1]$.
\begin{figure}[tbp]
 \includegraphics[width=7.5cm]{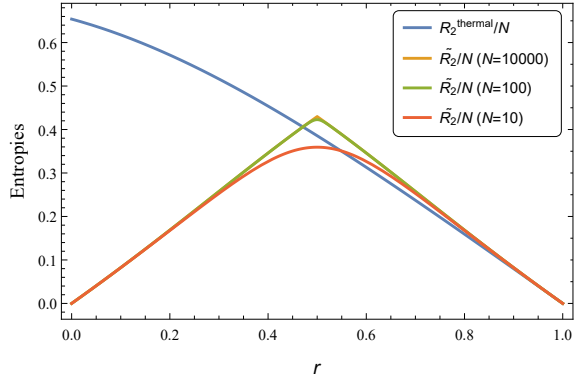}
\caption{Comparison of the typical behavior of the second R\'enyi
 entropy and the thermal R\'enyi entropy for $u=0.4$. At the very last stage of the dissipation process, i.e., $r\approx 1$, $\tilde{R}_2\approx R_2\sps{thermal}$, meaning that the Page curve-like behavior is reproduced.}\label{renyi_fig}
\end{figure}

\subsection{Arbitrary single-qudit process with trivial free Hamiltonian}
In the previous subsection, it has been shown that the second R\'enyi
entropy has a typical behavior for a dissipating qubit system whose
dynamics is described by the multiple-product of the amplitude damping
channel. Then, a natural question arises: is the typicality specific to the
energy-dissipating process? In this subsection, it is proven that for
\textit{any} channel composed of a multiple-product of a single-qudit quantum channel, $R_2$ has a typical behavior if the initial state is randomly chosen from the whole Hilbert space, i.e., $\mathcal{H}_S=\mathcal{H}_A$. This choice of the initial state corresponds to the microcanonical ensemble for a system $A$ with the trivial free Hamiltonian $H=0$ or at infinite temperature. Although this set up may not corresponds to a realistic system, the important implication is that for a wide range of dynamics, there may be a corresponding typical entanglement. 

When $\mathcal{H}_S=\mathcal{H}_A$, $P=\mathbb{I}_{\mathcal{H}_A}$ and $R_\mu=Q_\mu$. Let us introduce an orthonormal basis $\sigma_\mu^{(1\text{-qudit})}$ for the set of linear operators on a single-qudit system. A basis can be constructed by $\sigma_{\mu}\equiv \bigotimes_{n=1}^N\sigma_{\mu_n}^{(1\text{-qudit})}$, where we have identified $\mu=0,\cdots ,d^N-1$ with its $N$-length $d$-ary representation $\mu=\mu_1\mu_2\cdots\mu_N$. Assuming $\mathcal{E}_A=\mathcal{E}^{\otimes N}$ for some single-qudit quantum channel $\mathcal{E}$, $Q_\mu=\bigotimes_{n=1}^NQ_{\mu_n}^{(1\text{-qudit})}$, where we have defined
\begin{align}
 Q_{\mu}^{(1\text{-qudit})}\equiv \sum_i K_i^\dag \sigma_\mu^{(1\text{-qudit})} K_i
\end{align}
by using a set of Kraus operators $\{K_i\}$ of the channel $\mathcal{E}$.
 Therefore, $\alpha_i=a_i^N$ and $\beta_u=b_u^N$ holds for $i=1,2$ and $u=1,\cdots,5$, where
\begin{align}
 a_1\equiv\sum_{\mu=0}^{d^2-1}\mathrm{Tr}\left(Q_\mu^{(1\text{-qudit})}\right)^2, \quad a_2\equiv\sum_{\mu=0}^{d^2-1}\mathrm{Tr}\left(\left(Q_\mu^{(1\text{-qudit})}\right)^2\right),
\end{align}
and
\begin{equation}
 \begin{split}
    b_1&\equiv \sum_{\mu=0}^{d^2-1}\sum_{\nu=0}^{d^2-1}\left(\mathrm{Tr}\left(Q_\mu^{(1\text{-qudit})} Q_\nu^{(1\text{-qudit})}\right)\right)^2,\\
b_2&\equiv\sum_{\mu=0}^{d^2-1}\sum_{\nu=0}^{d^2-1}\mathrm{Tr}\left(Q_\mu^{(1\text{-qudit})} Q_\nu^{(1\text{-qudit})}\right)\mathrm{Tr}\left(Q_\mu^{(1\text{-qudit})}\right)\mathrm{Tr}\left(Q_\nu^{(1\text{-qudit})}\right),\\
 b_3&\equiv \sum_{\mu=0}^{d^2-1}\sum_{\nu=0}^{d^2-1} \mathrm{Tr}\left(Q_\mu^{(1\text{-qudit})} \left(Q_\nu^{(1\text{-qudit})}\right)^2\right)\mathrm{Tr}\left(Q_\mu^{(1\text{-qudit})}\right),\\
 b_4&\equiv \sum_{\mu=0}^{d^2-1}\sum_{\nu=0}^{d^2-1}\mathrm{Tr}\left(\left(Q_\mu^{(1\text{-qudit})}\right)^2 \left(Q_\nu^{(1\text{-qudit})}\right)^2\right),\\
 b_5&\equiv  \sum_{\mu=0}^{d^2-1}\sum_{\nu=0}^{d^2-1}\mathrm{Tr}\left(Q_\mu^{(1\text{-qudit})} Q_\nu^{(1\text{-qudit})} Q_\mu^{(1\text{-qudit})} Q_\nu^{(1\text{-qudit})}\right).
 \end{split}
\end{equation}
Applying Theorem~\ref{thm} for
$\left\{Q_{\mu}^{(1\text{-qudit})}\right\}_{\mu=0}^{d^2-1}$,
$\max_{u=1,\cdots,5}\{|b_u|\}<\max_{i,j=1,2}\{|a_ia_j|\}$ holds, which
implies that the ratio of $\max_{u=1,\cdots,5}\{|\beta_u|\}$ to
$\max_{i,j=1,2}\{|\alpha_i\alpha_j|\}$ is exponentially small in
$N$. Therefore, we have established the following theorem:
\begin{theorem}
 Consider an $N$-qudit system $A$ whose time evolution is described by
 $\mathcal{E}^{\otimes N}$ for some single-qudit channel
 $\mathcal{E}$. Assuming the initial pure state is randomly chosen from
 the whole Hilbert space, the second R\'enyi entropy between the system
 $A$ and its environment is typical when $N\to \infty$.
\end{theorem}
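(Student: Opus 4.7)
The plan is to reduce the $N$-qudit statement to a single-qudit one via the tensor-product structure of $\mathcal{E}^{\otimes N}$ and then apply Theorem~\ref{thm} at the single-qudit level. First I would record that the hypothesis $\mathcal{H}_S=\mathcal{H}_A$ forces $P=\mathbb{I}_A$ and hence $R_\mu=Q_\mu$. Because $\mathcal{E}_A=\mathcal{E}^{\otimes N}$ has a Kraus set of the form $\{K_{i_1}\otimes\cdots\otimes K_{i_N}\}$ and the operator basis factorizes as $\sigma_\mu=\bigotimes_n\sigma_{\mu_n}^{(1\text{-qudit})}$, the definition of $Q_\mu$ gives the tensor-product factorization $Q_\mu=\bigotimes_{n=1}^N Q_{\mu_n}^{(1\text{-qudit})}$.

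Next, I would exploit multiplicativity of the trace over tensor products together with the decomposition of $\sum_{\mu=0}^{d^N-1}$ into $N$ independent single-qudit sums $\sum_{\mu_n=0}^{d^2-1}$. Each of the quantities $\alpha_1,\alpha_2$ and $\beta_1,\ldots,\beta_5$ is a sum over index tuples of products of traces of products of the $Q_\mu$'s; under the factorization above, every such term is itself a tensor product, so each trace becomes a product of $N$ single-qudit traces. This yields the clean identities $\alpha_i=a_i^N$ and $\beta_u=b_u^N$ already displayed in the excerpt, where $a_i,b_u$ are the single-qudit analogues built from $\{Q_\mu^{(1\text{-qudit})}\}_{\mu=0}^{d^2-1}$.

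The key step is then to apply Theorem~\ref{thm} in the single-qudit setting, taking the ambient Hilbert space and the sub-Hilbert space both equal to the single-qudit Hilbert space of dimension $d$. The hypothesis $\dim\mathcal{H}_S>1$ is satisfied since $d\ge 2$, so the theorem yields $\max_u|b_u|<\max_{i,j}|a_ia_j|$. Taking $N$-th powers,
\begin{equation}
\frac{\max_{u=1,\ldots,5}|\beta_u|}{\max_{i,j=1,2}|\alpha_i\alpha_j|}=\left(\frac{\max_u|b_u|}{\max_{i,j}|a_ia_j|}\right)^N,
\end{equation}
which decays exponentially in $N$. Combined with $d_S=d^N$ growing exponentially (so that the $\Gamma$-function prefactors in Eqs.~(\ref{eq_ave})--(\ref{eq_sq_ave}) contribute only exponentially small corrections), the general analysis of Section~\ref{sec_average_variance} then yields exponential smallness of the ratio of the standard deviation to the mean of $\mathrm{Tr}(\rho_A^2)$, and Chebyshev's inequality delivers typicality of $R_2$.

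The only substantive point, rather than any hard calculation, is checking that Theorem~\ref{thm} indeed applies cleanly at the single-qudit level with the whole single-qudit Hilbert space playing the role of $\mathcal{H}_S$; this is immediate for $d\ge 2$. Every other step is bookkeeping with tensor products and the multiplicativity of trace, so the bulk of the effort is already discharged by Theorem~\ref{thm} and by the variance formulas established in Section~\ref{sec_average_variance}.
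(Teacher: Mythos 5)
Your proposal is correct and follows essentially the same route as the paper: factorize $Q_\mu=\bigotimes_n Q_{\mu_n}^{(1\text{-qudit})}$ to obtain $\alpha_i=a_i^N$, $\beta_u=b_u^N$, apply Theorem~\ref{thm} to the single-qudit family $\{Q_\mu^{(1\text{-qudit})}\}$ (where $d_S=d\ge 2$ guarantees strict inequality), and conclude that the ratio $\max_u|\beta_u|/\max_{i,j}|\alpha_i\alpha_j|$ decays exponentially in $N$. Nothing is missing.
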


As an example, let us consider the phase damping channel defined by
Kraus operators $K_1\equiv\ket{+}\bra{+}+\sqrt{1-\lambda}\ket{-}\bra{-}$
and $K_2\equiv\sqrt{\lambda}\ket{-}\bra{-}$ for an orthonormal basis
$\{\ket{+},\ket{-}\}$ and $\lambda\in[0,1]$. This channel describes an
information-loss process without loss of energy. As is easily
calculated, $a_1=2$ and $a_2=2(2-\lambda)$. Thus,
$\tilde{R}_2=-\ln{\left((2^N+2^N(2-\lambda)^N)/(2^N(2^N+1))\right)}$. The
plot of $\tilde{R}_2$ is given in Fig. \ref{pha_damp_fig}, showing a qualitatively different behavior from Page-like curves.  
\begin{figure}[tbp]
 \includegraphics[width=7.5cm]{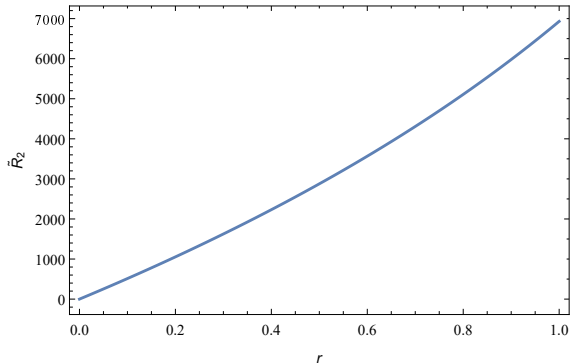}
\caption{The plot of $\tilde{R}_2$ in the phase damping channel for $N=10000$. Since the phase
 damping channel suppresses the off-diagonal component of the initial
 state, the system $A$ becomes more entangled with its environment as
 $\lambda$ becomes large. This behavior is qualitatively different from
 the energy dissipation process described by the amplitude damping
 channel, where the entropy starts to decrease at some point.}\label{pha_damp_fig}
\end{figure}

\section{Conclusions} \label{sec_conclusion}
In this paper, we investigated the typicality of entanglement for dynamical processes.
To investigate the typicality of the $m$-th R\'enyi entropy, we have obtained a formula to calculate the average and the variance for the trace of $m$-th power of a state.
 For a normal thermodynamical system initially prepared according to the
 microcanonical ensemble, the ratio of $\max_{u}\{|\beta_u|\}$ to
 $\max_{i,j}\{|\alpha_i\alpha_j|\}$ gives an important criterion for the
 typicality of the second R\'enyi entropy. We presented two analytically
 tractable cases where the second R\'enyi entropy has a typical
 behavior. The first one was an energy dissipation process in a
 multiple-qubit system, whose typical entanglement is qualitatively similar
 to a Page-like curve. This gives the ``first proof'' of the Page curve
 conjecture as a dynamical process. In the second case, a multiple-qudit
 system was initially prepared as a pure state randomly chosen from the
 whole Hilbert space. It was shown that for dynamics described by a
 multiple-product of a single-qudit quantum channel, there is a typical behavior of the second R\'enyi entropy. This result implies that the typicality is not a specific feature of a energy dissipation process.

The formula for the average and the variance serived in Section
\ref{sec_average_variance} is applicable for any dynamics and
distribution of the initial state, while it is difficult to obtain their
asymptotic behaviors. It is still an open question when entanglement has
a typical behavior in general. It will be also interesting to
investigate the typicality of higher order R\'enyi entropy in future
research. Once one could show typicality for all orders of R\'enyi entropy, the typical behavior of the entanglement entropy can be obtained from the analytical continuation. 

\begin{acknowledgments}
I am grateful to Masahiro Hotta for useful discussions. 
I would like to thank Yasusada Nambu for his help in a numerical calculation which motivated me to start this research.
This work was partially supported by Tohoku University Graduate Program
 on Physics for the Universe (GP-PU), Tohoku University Division for
 Interdisciplinary Advanced Research and Education (DIARE) and Japan Society for the Promotion of Science (JSPS) KAKENHI Grant Number JP18J20057.
\end{acknowledgments}

\appendix

\section{Formula for the formula for ensemble average}\label{ensemble}
\begin{theorem}
For any orthonormal basis $\left\{\ket{i}\right\}_{i=1}^{d_S}$ of $\mathcal{H}_S$ and the coefficients for a Haar-random pure state $\ket{\Psi}\equiv\sum_{i}X_i \ket{i}\in\mathcal{H}$ and a positive integer $m$,
\begin{align}
\overline{\prod_{k=1}^m X_{i^{(k)}}X_{j^{(k)}}{}^*}=\frac{\Gamma(d_S)}{\Gamma(d_S+m)}\sum_{\tau\in S_m}\prod_{k=1}^m\delta_{j^{(k)},i^{(\tau(k))}}\label{average_formula}
\end{align}
, where the overline denotes the integration over the Haar measure for
 the unitary group on $\mathcal{H}_S$, $S_m$ is the symmetric group of degree $m$ and $d_S\equiv\dim{\mathcal{H}_S}$.
\end{theorem}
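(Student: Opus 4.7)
The plan is to compute the tensor
\begin{equation}
T^{j^{(1)}\cdots j^{(m)}}_{i^{(1)}\cdots i^{(m)}}\equiv\overline{\prod_{k=1}^m X_{i^{(k)}}X_{j^{(k)}}^*}
\end{equation}
by recognizing it as the matrix elements, in the basis $\{\ket{i}\}$, of the single operator $M\equiv\overline{(\ket{\Psi}\bra{\Psi})^{\otimes m}}$ on $\mathcal{H}_S^{\otimes m}$, and then pinning down $M$ via its symmetries rather than grinding through a multi-dimensional integral. First I would use left-invariance of the Haar measure: for any unitary $U$ on $\mathcal{H}_S$, the state $U\ket{\Psi}$ has the same distribution as $\ket{\Psi}$, so $M$ commutes with $U^{\otimes m}$ for every such $U$.

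Next I would invoke the commutant theorem (the first half of Schur--Weyl duality): the commutant of $\{U^{\otimes m}\}$ on $\mathcal{H}_S^{\otimes m}$ is spanned by the permutation operators $V_\tau$ ($\tau\in S_m$), defined by $V_\tau(\ket{e_1}\otimes\cdots\otimes\ket{e_m})=\ket{e_{\tau^{-1}(1)}}\otimes\cdots\otimes\ket{e_{\tau^{-1}(m)}}$. Hence $M=\sum_\tau c_\tau V_\tau$. Since $\ket{\Psi}^{\otimes m}$ is totally symmetric, $M$ is in fact supported on the symmetric subspace $\mathcal{H}_+\subset\mathcal{H}_S^{\otimes m}$, which is an irreducible representation of the unitary group, so Schur's lemma forces $M$ to be a scalar multiple of the symmetrizer $\Pi_+=\frac{1}{m!}\sum_{\tau\in S_m}V_\tau$. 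The scalar is fixed by $\mathrm{Tr}\,M=\overline{\braket{\Psi|\Psi}^m}=1$ together with the standard dimension formula $\mathrm{Tr}\,\Pi_+=\binom{d_S+m-1}{m}=\Gamma(d_S+m)/(m!\,\Gamma(d_S))$, giving
\begin{equation}
M=\frac{\Gamma(d_S)}{\Gamma(d_S+m)}\sum_{\tau\in S_m}V_\tau.
\end{equation}

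Finally I would read off the matrix elements of $V_\tau$ in the product basis, $\braket{i^{(1)}\cdots i^{(m)}|V_\tau|j^{(1)}\cdots j^{(m)}}=\prod_{k=1}^m\delta_{i^{(k)},j^{(\tau^{-1}(k))}}$, and after the harmless reindexing $\tau\mapsto\tau^{-1}$ in the sum over $S_m$ this reproduces $\prod_k\delta_{j^{(k)},i^{(\tau(k))}}$, yielding the stated identity. The main obstacle here is conceptual rather than computational: carefully justifying the commutant statement and the irreducibility of $\mathcal{H}_+$, which is slightly heavy machinery for an appendix. A self-contained alternative I would consider is to parameterize $X_i=Z_i/\sqrt{\sum_j|Z_j|^2}$ with $Z_j$ i.i.d.\ standard complex Gaussians, evaluate $\overline{\prod_k Z_{i^{(k)}}Z_{j^{(k)}}^*}$ by Wick's theorem (which already produces the sum over $S_m$ of index-matching deltas), and separately compute the expectation of $(\sum_j|Z_j|^2)^{-m}$ using a Beta-function identity; the two factors combine to give precisely $\Gamma(d_S)/\Gamma(d_S+m)$, bypassing any representation theory.
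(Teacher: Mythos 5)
Your main argument is correct, and it takes a genuinely different route from the paper. The paper writes $\ket{\Psi}=U\ket{0}$ and imports the Weingarten formula of Collins as a black box; specializing to a single column collapses the double sum over permutations to $c\sum_{\tau\in S_m}\prod_k\delta_{i^{(k)'},i^{(\tau(k))}}$ with $c=\sum_\tau \mathrm{Wg}_{d_S}(\tau)$, and $c$ is then fixed by the normalization $\mathrm{Tr}\left(\left(\ket{\Psi}\bra{\Psi}\right)^m\right)=1$ together with the Stirling-number identity $\sum_{\tau\in S_m}d_S^{\#\mathrm{cycles}(\tau)}=\Gamma(d_S+m)/\Gamma(d_S)$. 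You instead identify the whole tensor as the operator $\overline{(\ket{\Psi}\bra{\Psi})^{\otimes m}}$, observe it is supported on the symmetric subspace and commutes with $U^{\otimes m}$, and apply Schur's lemma (irreducibility of $\mathcal{H}_+$) to conclude it is $\Pi_+/\mathrm{Tr}\,\Pi_+$; note that once you restrict to $\mathcal{H}_+$ the full commutant theorem is not even needed, so that step of your write-up is redundant. The two normalization computations are the same combinatorial fact in different clothing, since $\mathrm{Tr}\,\Pi_+=\frac{1}{m!}\sum_\tau d_S^{\#\mathrm{cycles}(\tau)}$. What your route buys is a self-contained derivation that never mentions the Weingarten function; what the paper's route buys is that it only quotes one standard formula and otherwise uses elementary index gymnastics. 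One caveat on your sketched Gaussian alternative: the expectation $\overline{\prod_k Z_{i^{(k)}}Z_{j^{(k)}}^*\,\|Z\|^{-2m}}$ does \emph{not} factorize into $\overline{\prod_k Z_{i^{(k)}}Z_{j^{(k)}}^*}$ times $\overline{\|Z\|^{-2m}}$ as you state (and $\overline{\|Z\|^{-2m}}=\Gamma(d_S-m)/\Gamma(d_S)$ is not the right constant). The correct factorization uses the independence of the direction $Z/\|Z\|$ from the radius $\|Z\|$ to write $\overline{\prod_k X_{i^{(k)}}X_{j^{(k)}}^*}=\overline{\prod_k Z_{i^{(k)}}Z_{j^{(k)}}^*}\big/\overline{\|Z\|^{2m}}$, with $\overline{\|Z\|^{2m}}=\Gamma(d_S+m)/\Gamma(d_S)$ in the denominator; with that repair the alternative also goes through.
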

\begin{proof}
Any pure state in $\mathcal{H}_S$ can be written as $\ket{\Psi}=U\ket{0}=\sum_{i}U_{i,0}\ket{i}$
, where $\ket{0}\in\mathcal{H}_S$ is a unit vector, $U$ is a unitary operator and $U_{i,0}\equiv \braket{i|U|0}$. Taking $U$ as random, we get a random pure state $\ket{\Psi}$.

From the Weingarten calculus formula \cite{Collins}, we have
\begin{align}
 &\overline{\prod_{k=1}^m U_{i^{(k)},j^{(k)}}U_{i^{(k)'},j^{(k)'}}{}^*}=\sum_{\tau_1,\tau_2\in S_m} \prod_{k=1}^m\delta_{i^{(k)'},i^{(\tau_1(k))}}\delta_{j^{(k)'},j^{(\tau_2(k))}}\mathrm{Wg}_{d_S}(\tau_2\tau_1^{-1})\nonumber
\end{align}
, where $\mathrm{Wg}_{d_S}(\tau_2\tau_1^{-1})$ is called the Weingarten function, which depend on $d_S$ and $\tau_2\tau_1^{-1}$. Here we do not need any detail of this function. In the case of random pure state, this formula is simplified as follows:
\begin{align}
 \overline{\prod_{k=1}^m U_{i^{(k)},0}U_{i^{(k)'},0}{}^*}&=\sum_{\tau_1,\tau_2\in S_m} \prod_{k=1}^m\delta_{i^{(k)'},i^{(\tau_1(k))}}\mathrm{Wg}_{d_S}(\tau_2\tau_1^{-1})\nonumber\\
 &=c\sum_{\tau\in S_m} \prod_{k=1}^m\delta_{i^{(k)'},i^{(\tau(k))}}\nonumber
\end{align}
, where we have defined $ c\equiv\sum_{\tau\in S_m}\mathrm{Wg}_{d_S}(\tau)$. 
For notational convenience, let us write $X_{i}\equiv U_{i,0}$. Then, a Haar-random state is written as $\ket{\Psi}=\sum_{i}X_{i}\ket{i}$ and $X_{i}$ satisfies
\begin{align}
 \overline{\prod_{k=1}^m X_{i^{(k)}}X_{i^{(k)'}}{}^*} =c\sum_{\tau\in S_m} \prod_{k=1}^m\delta_{i^{(k)'},i^{(\tau(k))}}\nonumber
\end{align}
The coefficient $c$ is determined by the normalization condition of the
 Haar measure. To calculate $c$, let us consider
 $\mathrm{Tr}\left(\left(\ket{\Psi}\bra{\Psi}\right)^m\right)$. Since
 $\mathrm{Tr}\left(\left(\ket{\Psi}\bra{\Psi}\right)^m\right)=1$ holds for any pure state, we have
\begin{align}
 1&=\overline{1}=\overline{\mathrm{Tr}\left(\left(\ket{\Psi}\bra{\Psi}\right)^m\right)} =\overline{\prod_{k=1}^m\sum_{i^{(k)},i^{(k)'}} X_{i^{(k)}}X_{i^{(k)'}}{}^*\delta_{i^{(k)'},i^{(k+1)}}}\nonumber\\
 &= c\sum_{\tau\in S_m}\prod_{k=1}^m\sum_{i^{(k)},i^{(k)'}} \delta_{i^{(k)'},i^{(\tau(k))}}\delta_{i^{(k)'},i^{(k+1)}}\nonumber\\
 &= c\sum_{\tau\in S_m}\prod_{k=1}^m\sum_{i^{(k)},} \delta_{i^{(\tau(k))},i^{(k+1)}} = c\sum_{\tau\in S_m}\prod_{k=1}^m\sum_{i^{(k)},} \delta_{i^{(\sigma(k))},i^{(k)}} \nonumber\\
 &=c\sum_{k=0}^m\nonumber
\begin{bmatrix}
 m\\
 k
\end{bmatrix}
d_S^k=c\frac{\Gamma(d_S+m)}{\Gamma(d_S)}
\end{align}
, where $\begin{bmatrix}
 m\\
 k
\end{bmatrix}$ is the unsigned Stirling numbers of the first kind, i.e. the number of elements in $S_m$ with $k$ disjoint cycle. Therefore,
\begin{align}
 c=\frac{\Gamma(d_S)}{\Gamma(d_S+m)}\nonumber
\end{align}.
\end{proof}

\section{Proof of Theorem \ref{thm}}\label{thm_proof}
Let us start with the following two lemmas:
 \begin{lemma}\label{lem_pos}
For two positive matrices $M$, $N$, $\mathrm{Tr}\left(MN\right)\leq\mathrm{Tr}\left(M\right)\mathrm{Tr}\left(N\right)$ holds. The
  inequality holds as an equality if and only if one of the following
  conditions is satisfied:
\begin{enumerate}
 \item $M=0$.
 \item $N=0$.
 \item $M=m P$ and $N=n P$ for
  some positive numbers $m,n$ and a one-dimensional projection operator $P$.
\end{enumerate}
 \end{lemma}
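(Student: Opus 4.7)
The plan is to prove both the inequality and the equality condition using the spectral decomposition of the two positive matrices. Write $M=\sum_i m_i \ket{m_i}\bra{m_i}$ and $N=\sum_j n_j \ket{n_j}\bra{n_j}$ with $m_i,n_j\geq 0$ and orthonormal eigenbases $\{\ket{m_i}\}$, $\{\ket{n_j}\}$. A direct computation then gives
\begin{equation}
\tr{MN}=\sum_{i,j}m_i n_j\,|\braket{m_i|n_j}|^2,\qquad \tr{M}\tr{N}=\sum_{i,j}m_i n_j.
\end{equation}
Since $\ket{m_i}$ and $\ket{n_j}$ are unit vectors, the Cauchy--Schwarz inequality gives $|\braket{m_i|n_j}|^2\leq 1$, and every coefficient $m_i n_j$ is non-negative, so the inequality $\tr{MN}\leq\tr{M}\tr{N}$ follows termwise.

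For the equality case, the trivial directions ($M=0$ or $N=0$, or both rank-$1$ positive multiples of the same projector) are immediate by plugging in. The substantive direction is the converse: assuming equality with $M\neq 0$ and $N\neq 0$, I would argue term by term. Equality of the two sums, together with non-negativity, forces $|\braket{m_i|n_j}|^2=1$ for every pair $(i,j)$ with $m_i n_j>0$. Fix any $i_0$ with $m_{i_0}>0$ and any $j_0$ with $n_{j_0}>0$; then $\ket{m_{i_0}}$ and $\ket{n_{j_0}}$ coincide up to a global phase. If there were a second index $i_1\neq i_0$ with $m_{i_1}>0$, then $\ket{m_{i_1}}$ would be orthogonal to $\ket{m_{i_0}}$ and hence to $\ket{n_{j_0}}$, giving $|\braket{m_{i_1}|n_{j_0}}|^2=0$, contradicting the forced equality $=1$. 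So $M$ has only one non-zero eigenvalue, and by the symmetric argument so does $N$, and their respective eigenvectors coincide up to phase. Setting $P\equiv \ket{m_{i_0}}\bra{m_{i_0}}=\ket{n_{j_0}}\bra{n_{j_0}}$, $m\equiv m_{i_0}$, $n\equiv n_{j_0}$ gives $M=mP$ and $N=nP$, exactly the third equality case.

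The inequality itself is routine; the main (though still elementary) obstacle is keeping the bookkeeping clean in the equality analysis, specifically the step that uses orthogonality of the eigenvectors within each spectral decomposition to rule out rank greater than one. This is where the positivity of $M$ and $N$ is essential, since it ensures the orthonormality of the eigenbases and the non-negativity of the weights $m_i n_j$ that permits the termwise comparison. Once the rank-one structure is established, identifying $P$ as a common one-dimensional projection is immediate.
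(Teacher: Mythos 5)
Your proposal is correct and follows essentially the same route as the paper: both use the spectral decompositions to write $\mathrm{Tr}\left(MN\right)=\sum_{i,j}m_i n_j|\braket{m_i|n_j}|^2$, bound each overlap by $1$, and then in the equality case show that a single pair of nonzero eigenvalues forces both operators to be rank one with a common eigenvector. The only cosmetic difference is that the paper rules out further nonzero eigenvalues via the normalization $\sum_i|\braket{m_i|n_{j_0}}|^2=1$, whereas you invoke orthogonality of the eigenbasis directly; these are the same argument.
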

\begin{proof} 
By using the eigenvalue decompositions $M=\sum_{i}m_{i}\ket{m_i}\bra{m_i}$ and $N=\sum_i n_i\ket{n_i}\bra{n_i}$, 
\begin{align}
 \mathrm{Tr}\left(MN\right)=\sum_{i,j}m_i n_j\left|\Braket{m_i|n_j}\right|^2,\quad \mathrm{Tr}\left(M\right)\mathrm{Tr}\left(N\right)=\sum_{i,j}m_i n_j.\nonumber
\end{align}
Since $m_i,n_i\geq0$ and $\left|\braket{m_i|n_j}\right|^2\leq 1$,
 $\mathrm{Tr}\left(MN\right)\leq
 \mathrm{Tr}\left(M\right)\mathrm{Tr}\left(N\right)$. The equality holds
 if and only if for any set of $(i,j)$, one of the following conditions
 is satisfied:
 \begin{enumerate}
  \item $m_i=0$.
  \item $n_j=0$.
  \item $\left|\braket{m_i|n_j}\right|=1$.
 \end{enumerate}
If $M$ or $N$ is the zero operator, this condtion is satisfied.
If $M$ and $N$ are not the zero operators, there exists a set of indices $(i_0,j_0)$ such that $m_{i_0}\neq0$ and $n_{i_0}\neq0$. Then $|\braket{m_{i_0}|n_{j_0}}|=1$ holds, and 
  \begin{align}
   0=\braket{n_{j_0}|n_{j_0}}-1=\sum_{i}|\braket{m_i|n_{j_0}}|^2-1=\sum_{i\neq i_0}|\braket{m_i|n_{j_0}}|^2,
  \end{align}
 which implies $m_{i}=0$ for all $i\neq i_0$. Similarly, $n_j=0$ for all
 $j\neq j_0$. Therefore, $M=m_{i_0}P$ and $N=n_{j_0}P$, where $P\equiv
 \ket{m_{i_0}}\bra{m_{i_0}}=\ket{n_{j_0}}\bra{n_{j_0}}$ is a
 one-dimensional projection operator.
\end{proof}
\begin{lemma}\label{coef_ineq}
 For any set of real numbers $\{r_i\}_{i=1}^n$ and Hermite operators $\{R_i\}_{i=1}^n$, it holds
\begin{align}
 \mathrm{Tr}\left(\left(\sum_{i=1}^n r_i R_i\right)^2\right)\leq\left( \sum_{i=1}^n r_i ^2\right)\left(\sum_{j=1}^n\mathrm{Tr}( R_j^2)\right).
\end{align}
The inequality holds as an equality if and only if there exists an Hermite operator $A$ such that for all $i=1,2,\cdots n$, $R_i=r_i A$.
\end{lemma}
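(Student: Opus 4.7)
The statement is a Hilbert--Schmidt version of the Cauchy--Schwarz inequality, and the plan is to reduce it to ordinary Cauchy--Schwarz in $\mathbb{R}^n$ by expanding the $R_i$ in an orthonormal Hermitian operator basis. Concretely, choose any orthonormal basis $\{E_\mu\}$ of the real inner-product space of Hermitian operators (equipped with $\langle X,Y\rangle\equiv\mathrm{Tr}(XY)$); the paper has already invoked such a basis $\{\sigma_\mu\}$ in Section~\ref{sec_average_variance}, and in particular we know it exists. Write $R_i=\sum_\mu c_i^\mu E_\mu$, where $c_i^\mu\equiv\mathrm{Tr}(R_i E_\mu)\in\mathbb{R}$ since both factors are Hermitian.

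With that expansion, both sides of the claimed inequality collapse to elementary sums: $\mathrm{Tr}(R_j^2)=\sum_\mu (c_j^\mu)^2$ and, using orthonormality of $\{E_\mu\}$,
\begin{equation*}
\mathrm{Tr}\!\left(\Bigl(\sum_{i=1}^n r_i R_i\Bigr)^{\!2}\right)=\sum_\mu\Bigl(\sum_{i=1}^n r_i\, c_i^\mu\Bigr)^{\!2}.
\end{equation*}
For each fixed $\mu$, the classical Cauchy--Schwarz inequality in $\mathbb{R}^n$ gives $\bigl(\sum_i r_i c_i^\mu\bigr)^2\leq\bigl(\sum_i r_i^2\bigr)\bigl(\sum_i (c_i^\mu)^2\bigr)$. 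Summing over $\mu$ and interchanging the order of summation on the right-hand side yields
\begin{equation*}
\sum_\mu\Bigl(\sum_i r_i c_i^\mu\Bigr)^{\!2}\leq \Bigl(\sum_i r_i^2\Bigr)\sum_\mu\sum_i (c_i^\mu)^2=\Bigl(\sum_i r_i^2\Bigr)\Bigl(\sum_j \mathrm{Tr}(R_j^2)\Bigr),
\end{equation*}
which is the inequality.

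For the equality characterization, equality must hold in the per-$\mu$ Cauchy--Schwarz for every $\mu$. Thus for each $\mu$ the vectors $(r_1,\dots,r_n)$ and $(c_1^\mu,\dots,c_n^\mu)$ are linearly dependent; if $\vec{r}=0$ the statement is vacuous (all $R_j$ must vanish, consistent with $R_i=r_i A$), while otherwise there exists $\lambda_\mu\in\mathbb{R}$ with $c_i^\mu=\lambda_\mu r_i$ for all $i$. Setting $A\equiv\sum_\mu\lambda_\mu E_\mu$, which is Hermitian, gives $R_i=\sum_\mu c_i^\mu E_\mu=r_i A$, as required; the converse is an immediate substitution.

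I do not expect any serious obstacle here. The only minor care-point is verifying that the expansion coefficients are real, which follows from Hermiticity of $R_i$ and of the basis elements, and checking the degenerate case $r_i=0$ in the equality condition (where $R_i=r_i A=0$ is forced because otherwise $R_i$ contributes to the right side but not to the left, breaking equality). The argument uses nothing beyond the existence of an orthonormal Hermitian basis and scalar Cauchy--Schwarz.
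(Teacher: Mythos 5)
Your proof is correct and follows essentially the same route as the paper's: expand the $R_i$ in an orthonormal Hermitian basis, reduce both sides to component sums, and apply scalar Cauchy--Schwarz for each $\mu$, with the equality case read off from proportionality of $(c_1^\mu,\dots,c_n^\mu)$ to $(r_1,\dots,r_n)$. (Your parenthetical that $\vec r=0$ forces all $R_j$ to vanish is not quite right --- equality then holds trivially for any $R_j$ --- but this degenerate case is immaterial and is not treated in the paper either.)
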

\begin{proof}
 Define an orthonormal basis $\{\sigma_\mu\}_{\mu}$ for a set of Hermite operators, satisfying $\mathrm{Tr}\left(\sigma_\mu\sigma_\nu\right)=\delta_{\mu,\nu}$. The Hermite operators can be expanded as $R_i=\sum_{\mu}c_{i,\mu}\sigma_\mu$ by using coefficients $c_{i,\mu}\in\mathbb{R}$. By using the Cauchy--Schwartz inequality, we get
\begin{align}
 \mathrm{Tr}\left(\left(\sum_{i=1}^n r_iR_i\right)^2\right)&=\sum_{\mu}\left(\sum_{i=1}^n r_i c_{i,\mu}\right)^2\nonumber\\
&\leq \sum_{\mu} \left(\sum_{i=1}^nr_i^2\right)\left(\sum_{j=1}^n c_{j,\mu}^2\right)= \left(\sum_{i=1}^nr_i^2\right)\left(\sum_{j=1}^n\mathrm{Tr}( R_j^2)\right).
\end{align}
The equality holds if and only if for all $\mu$, there exists a real number $k_\mu$ such that $c_{i,\mu}=k_\mu r_i$ for $i=1,2,\cdots,n$, i.e.,  $R_i=\sum_{\mu}c_{i,\mu}\sigma_\mu=r_iA$, where $A\equiv \sum_{\mu}k_\mu \sigma_\mu$ is an Hermite operator.
\end{proof}
By using Lemma \ref{lem_pos} and \ref{coef_ineq}, the following theorem is proved:
  \begin{theorem}\label{ineq}
   For any set of Hermite operators $\{R_\mu\}_{\mu}$, the following
   inequality holds:
  \begin{align}
   \max_{i,j}\left\{|\alpha_i\alpha_j|\right\}\geq\max_u\left\{|\beta_u|\right\},\label{thm1_eq}
  \end{align}
   where
   \begin{align}
   \alpha_1&\equiv \sum_{\mu}\mathrm{Tr}\left(R_\mu\right)^2,\,\alpha_2\equiv\sum_{\mu}\mathrm{Tr}\left(R_\mu^2\right) ,\nonumber\\
    \beta_1&\equiv \sum_{\mu,\nu}\left(\mathrm{Tr}\left(R_\mu
    R_\nu\right)\right) ^2,\,    \beta_2\equiv \sum_{\mu,\nu} \mathrm{Tr}\left(R_\mu
    R_\nu\right)\mathrm{Tr}\left(R_\mu\right)\mathrm{Tr}\left(R_\nu\right),\nonumber\\
    \beta_3&\equiv \sum_{\mu,\nu}  \mathrm{Tr}\left(R_\mu
    R_\nu^2\right)\mathrm{Tr}\left(R_\mu\right) ,\,
    \beta_4\equiv  \sum_{\mu,\nu} \mathrm{Tr}\left(R_\mu^2 R_\nu^2\right) ,\,
    \beta_5\equiv  \sum_{\mu,\nu} \mathrm{Tr}\left(R_\mu R_\nu R_\mu R_\nu\right). \nonumber
   \end{align}
   The inequality holds as an
   equality if and only if there exists an Hermite operator $A$
   satisfying $\mathrm{Tr}\left(A^2\right)\geq
   \mathrm{Tr}\left(A\right)^2$ such that $\forall \mu$, $\exists r_\mu\in\mathbb{R}$, $R_\mu=r_\mu A$.
  \end{theorem}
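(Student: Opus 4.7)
The plan is to bound each of $|\beta_1|,\ldots,|\beta_5|$ by one of $\alpha_1^2$, $\alpha_1\alpha_2$, or $\alpha_2^2$ using only Lemmas~\ref{lem_pos} and~\ref{coef_ineq} together with the standard Hilbert--Schmidt Cauchy--Schwarz inequality $|\mathrm{Tr}(XY)|\le\sqrt{\mathrm{Tr}(X^\dag X)\mathrm{Tr}(Y^\dag Y)}$. Since $\alpha_1,\alpha_2\ge 0$, combining these estimates gives $\max_u|\beta_u|\le\max_{i,j}|\alpha_i\alpha_j|=\max(\alpha_1,\alpha_2)^2$. The chain through $\beta_1$ turns out to be what governs when equality can actually be attained, so tracing its equality condition will produce the stated characterization.

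For the individual bounds I would proceed as follows. Applying Lemma~\ref{lem_pos} to the positive operators $R_\mu^2$ and $R_\nu^2$ and summing over $\mu,\nu$ gives $\beta_4\le\alpha_2^2$. Setting $X\equiv R_\mu R_\nu$ so that $\mathrm{Tr}(R_\mu R_\nu R_\mu R_\nu)=\mathrm{Tr}(X^2)$ while $X^\dag X=R_\nu R_\mu^2 R_\nu$, the elementary inequality $|\mathrm{Tr}(X^2)|\le\mathrm{Tr}(X^\dag X)$ yields $|\beta_5|\le\beta_4\le\alpha_2^2$. The Hilbert--Schmidt Cauchy--Schwarz estimate $(\mathrm{Tr}(R_\mu R_\nu))^2\le\mathrm{Tr}(R_\mu^2)\mathrm{Tr}(R_\nu^2)$ sums to $\beta_1\le\alpha_2^2$. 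For $\beta_2$, rewrite it as $\beta_2=\mathrm{Tr}(S^2)$ with $S\equiv\sum_\mu\mathrm{Tr}(R_\mu)R_\mu$; Lemma~\ref{coef_ineq} with the choice $r_\mu=\mathrm{Tr}(R_\mu)$ then gives $\beta_2\le\alpha_1\alpha_2$. For $\beta_3$, write $\beta_3=\mathrm{Tr}(ST)$ with $T\equiv\sum_\nu R_\nu^2$; Hermitian Cauchy--Schwarz combined with $\mathrm{Tr}(S^2)\le\alpha_1\alpha_2$ and $\mathrm{Tr}(T^2)=\beta_4\le\alpha_2^2$ yields $|\beta_3|\le\alpha_2\sqrt{\alpha_1\alpha_2}$, which is at most $\max(\alpha_1\alpha_2,\alpha_2^2)$.

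The main obstacle is the equality analysis. When $\alpha_1>\alpha_2$ one checks that every bound above is strictly smaller than $\alpha_1^2$, so equality in the theorem forces $\alpha_2\ge\alpha_1$. Among the routes that could reach $\alpha_2^2$, the equality cases for $\beta_4$ and $\beta_5$ (through Lemma~\ref{lem_pos}) would force the stronger conclusion that all nonzero $R_\mu$ are scalar multiples of a single one-dimensional projection, while the Cauchy--Schwarz equality for $\beta_1\le\alpha_2^2$ requires only that $R_\mu$ and $R_\nu$ be linearly dependent for every pair, i.e., $R_\mu=r_\mu A$ for some Hermitian $A$ and real scalars $r_\mu$. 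Conversely, substituting $R_\mu=r_\mu A$ gives $\beta_1=(\sum_\mu r_\mu^2)^2\mathrm{Tr}(A^2)^2=\alpha_2^2$, so equality is indeed attained, and $\alpha_2\ge\alpha_1$ translates precisely into $\mathrm{Tr}(A^2)\ge\mathrm{Tr}(A)^2$. The delicate part is verifying that the equality conditions coming from $\beta_2$, $\beta_3$, $\beta_4$, and $\beta_5$ all lie inside the class identified via $\beta_1$, and handling the degenerate subcases (such as all $R_\mu=0$ or $\mathrm{Tr}(A)=0$) without enlarging the equality set beyond ``$R_\mu=r_\mu A$ with $\mathrm{Tr}(A^2)\ge\mathrm{Tr}(A)^2$''.
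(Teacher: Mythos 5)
Your proposal is correct and follows essentially the same route as the paper's proof: term-by-term bounds $\beta_1\le\alpha_2^2$, $\beta_2\le\alpha_1\alpha_2$, $|\beta_3|\le\sqrt{\alpha_1\alpha_2^3}$, $\beta_4\le\alpha_2^2$ and $|\beta_5|\le\beta_4$ obtained from Lemma~\ref{lem_pos}, Lemma~\ref{coef_ineq} and the Hilbert--Schmidt Cauchy--Schwarz inequality, followed by tracing each equality case back to the form $R_\mu=r_\mu A$ with $\mathrm{Tr}\left(A^2\right)\ge\mathrm{Tr}\left(A\right)^2$. The only substantive variation is in $\beta_5$, where you deduce the termwise bound $\left|\mathrm{Tr}\left(R_\mu R_\nu R_\mu R_\nu\right)\right|\le\mathrm{Tr}\left(R_\mu^2R_\nu^2\right)$ from $|\mathrm{Tr}(X^2)|\le\mathrm{Tr}(X^\dag X)$ with $X=R_\mu R_\nu$, whereas the paper sandwiches the summed quantity via commutator and anticommutator identities; both are valid and yield $|\beta_5|\le\beta_4$.
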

  \begin{proof}
   Without loss of generality, one can assume $R_\mu\neq 0$ for all $\mu$
   since a zero operator does not contribute all $\alpha$s and $\beta$s. Let us consider
   an upperbound for each $\beta_u$ one by one.
   \begin{itemize}
\item $  \beta_1=\sum_{\mu,\nu}\left(\mathrm{Tr}\left(R_\mu R_\nu\right) \right)^2$:\\
For any set of $(\mu,\nu)$,
   \begin{align}
    \left(\mathrm{Tr}\left(R_\mu R_\nu\right) \right)^2\leq \mathrm{Tr}\left(R_\mu^2\right)\mathrm{Tr}\left(R_\nu^2\right),
   \end{align}
   where we have used the Cauchy-Schwartz inequality for the
      Hilbert--Schmidt inner product. The inequality holds as an
      equality if and only if $\exists \alpha\indices{_\mu^\nu}\in
      \mathbb{R}$ s.t. $R_\mu=\alpha\indices{_\mu^\nu}R_\nu$. Since $R_\mu\neq0$
      for all $\mu$, $\exists r_\mu\in\mathbb{R}\backslash \{0\}$ s.t. $R_\mu=r_\mu A$ for some Hermite operator $A$. Then, $\left(\mathrm{Tr}\left(R_\mu\right)\right)^2=r_\mu^2\left(\mathrm{Tr}\left(A\right)\right)^2$ and $\mathrm{Tr}\left(R_\mu^2\right)=r_\mu^2\mathrm{Tr}\left(A^2\right)$.
$\sum_{\mu,\nu}\mathrm{Tr}\left(R_\mu^2\right)\mathrm{Tr}\left(R_\nu^2\right)=\max_{i,j}\left\{|\alpha_i\alpha_j|\right\}$
      if and only if $\mathrm{Tr}\left(R_\mu^2\right)\geq
      \left(\mathrm{Tr}\left(R_\mu\right)\right)^2$, or equivalently
      $r_\mu^2(\mathrm{Tr}\left(A^2\right)-\mathrm{Tr}\left(A\right)^2)\geq
      0$ holds for all
      $\mu$. Thus, $\mathrm{Tr}\left(A^2\right)\geq\mathrm{Tr}\left(A\right)^2$.
 \item $ \beta_2=\sum_{\mu,\nu}\mathrm{Tr}\left(R_\mu
       R_\nu\right)\mathrm{Tr}\left(R_\mu\right)\mathrm{Tr}\left(R_\nu\right)$:\\
       By using Lemma \ref{coef_ineq}, 
   \begin{align}
    \sum_{\mu,\nu}\mathrm{Tr}\left(R_\mu
    R_\nu\right)\mathrm{Tr}\left(R_\mu\right)\mathrm{Tr}\left(R_\nu\right)=\mathrm{Tr}\left(\left(\sum_{\mu}\mathrm{Tr}\left(R_\mu\right)
    R_\mu\right)^2\right)\leq\sum_{\mu,\nu}\mathrm{Tr}\left(R_\mu\right)^2\mathrm{Tr}\left(R_\nu^2\right)\label{munu_mu_nu}.
   \end{align}
   This inequality holds as an equality if and only if there exist an
       Hermite operator $A$ s.t. $R_\mu=\mathrm{Tr}\left(R_\mu\right)A$
       for all $\mu$. Since we have assumed $R_\mu\neq0$, this conditioin implies
       $\mathrm{Tr}\left(A\right)=1$. Noting
       $\sum_{\mu,\nu}\mathrm{Tr}\left(R_\mu\right)^2\mathrm{Tr}\left(R_\nu^2\right)(=\alpha_1\alpha_2)=\max_{i,j}\left\{|\alpha_i\alpha_j|\right\}$
       if and only if
       $\sum_{\mu}\mathrm{Tr}\left(R_\mu^2\right)=\sum_{\mu}\mathrm{Tr}\left(R_\mu\right)^2$,
       it follows that $\mathrm{Tr}\left(A^2\right)=1(=\mathrm{Tr}\left(A\right)^2)$.
    \item $ \beta_3=\sum_{\mu,\nu}\mathrm{Tr}\left(R_\mu R_\nu^2\right)\mathrm{Tr}\left(R_\mu\right)$:\\
By using the Cauchy--Schwartz inequality,
\begin{align}
\left|\sum_{\mu,\nu}\mathrm{Tr}\left(R_\mu
 R_\nu^2\right)\mathrm{Tr}\left(R_\mu\right)\right|&=\left|\sum_{\mu}\mathrm{Tr}\left(
 R_\mu \sum_\nu R_\nu^2\right)\mathrm{Tr}\left(R_\mu\right)\right|\nonumber\\
 &\leq  \sqrt{\sum_{\mu}\left(\mathrm{Tr}\left(R_\mu\right)\right)^2\sum_{\rho}\left(\mathrm{Tr}\left(R_\rho \sum_{\nu}R_\nu^2\right)\right)^2}.
\end{align}
By using the Cauchy--Schwartz inequality, 
\begin{align}
 \left(\mathrm{Tr}\left(R_\rho \sum_{\nu}R_\nu^2\right)\right)^2&\leq\mathrm{Tr}\left(\left(\sum_\nu R_\nu^2\right)^2\right)\mathrm{Tr}\left(R_\rho^2\right)=\sum_{\mu,\nu}\mathrm{Tr}\left(R_\mu^2 R_\nu^2\right)\mathrm{Tr}\left(R_\rho^2\right)\nonumber\\
 &\leq \sum_{\mu,\nu}\mathrm{Tr}\left(R_\mu^2\right)\mathrm{Tr}\left(R_\nu^2\right)\mathrm{Tr}\left(R_\rho^2\right)
\end{align}
holds for all $\rho$. On the second line, we have used Lemma
	  \ref{lem_pos}. The second inequality holds as an equality if and only if
       $R_\mu=r_\mu P$ for some $r_\mu\in\mathbb{R}$ and a one-dimensional projector $P$.
Thus, 
\begin{align}
 \left|\sum_{\mu,\nu}\mathrm{Tr}\left(R_\mu R_\nu^2\right)\mathrm{Tr}\left(R_\mu\right)\right|&\leq \sqrt{\left(\sum_{\mu}\left(\mathrm{Tr}\left(R_\mu\right)\right)^2\right)\left(\sum_{\nu}\mathrm{Tr}\left(R_\nu^2\right)\right)^3}\nonumber\\
&\leq \max{\left\{\alpha_1^2,\alpha_2^2\right\}}.
\end{align}
 \item $ \beta_4=\sum_{\mu,\nu}\mathrm{Tr}\left(R_\mu^2R_\nu^2\right)$\\
       Since $R_\nu^2$ is a positive operator,
   \begin{align}
    \mathrm{Tr}\left(R_\mu^2R_\nu^2\right)\leq \mathrm{Tr}\left(R_\mu^2\right)\mathrm{Tr}\left(R_\nu^2\right).
   \end{align}
    Since we have assuumed $R_\mu\neq 0$, the inequality holds as an equality for any pair of $(\mu,\nu)$ if
       and only if there exists one-dimensional projection operator $P$
       s.t. $\forall\mu$, $\exists a_\mu\in\mathbb{R}_{>0}$
       s.t. $R_\mu^2=a_\mu P$. This condition is equivalent to
       $R_\mu=r_\mu P$, where $r_\mu=\pm \sqrt{a_\mu}$. 

 \item $ \beta_5=\sum_{\mu,\nu}\mathrm{Tr}\left(R_\mu R_\nu R_\mu R_\nu\right)$\\
   By using a commutator defined by $\left[A,B\right]\equiv AB-BA$, 
   \begin{align}
 \mathrm{Tr}\left(R_\mu R_\nu R_\mu R_\nu\right)&=\mathrm{Tr}\left(R_\mu^2 R_\nu^2\right)+\mathrm{Tr}\left(R_\mu R_\nu\left[R_\mu,R_\nu\right]\right)
   \end{align}
       holds.
Since $i\left[Q_\mu,Q_\nu\right]$ is Hermite, 
\begin{align}
 \sum_{\mu,\nu}\mathrm{Tr}\left(R_\mu R_\nu\left[R_\mu,R_\nu\right]\right)=-\frac{1}{2}\sum_{\mu,\nu}\mathrm{Tr}\left(\left(i\left[R_\mu,R_\nu\right]\right)^2\right)\leq 0.
\end{align}
Thus, 
\begin{align}
 \sum_{\mu,\nu}\mathrm{Tr}\left(R_\mu R_\nu R_\mu R_\nu\right)\leq \sum_{\mu,\nu}\mathrm{Tr}\left(R_\mu^2R_\nu^2\right).
\end{align}
       By using anticommutator $\left\{A,B\right\}\equiv
       AB+BA$, a similar calculation shows
        \begin{align}
	 \mathrm{Tr}\left(R_\mu R_\nu R_\mu R_\nu\right)=-\mathrm{Tr}\left(R_\mu^2 R_\nu^2\right)+\mathrm{Tr}\left(R_\mu \left\{R_\mu,R_\nu\right\}R_\nu\right)
	\end{align}
       and
       \begin{align}
	\sum_{\mu,\nu}\mathrm{Tr}\left(R_\mu\left\{R_\mu,R_\nu\right\}R_\nu\right)=\frac{1}{2}\sum_{\mu,\nu}\mathrm{Tr}\left(\left\{R_\mu,R_\nu\right\}^2\right)\geq 0,
       \end{align}
       which implies
       \begin{align}
	\sum_{\mu,\nu}\mathrm{Tr}\left(R_\mu R_\nu R_\mu R_\nu\right)\geq-\sum_{\mu,\nu}\mathrm{Tr}\left(R_\mu^2 R_\nu^2\right).
       \end{align}
       Therefore, $|\beta_5|\leq |\beta_4|$.
   \end{itemize}
  \end{proof}
Let us define a set of operators $\{R_\mu\}_\mu$ as $R_\mu\equiv
P\left(\sum_{i}K_i^\dag\sigma_\mu K_i\right)P$ by using a set of Kraus
operators $\{K_i\}_i$ and a projection operator $P$ onto the sub-Hilbert
space $\mathcal{H}_S$. Since Kraus
operators satisfy $\sum_i K_i^\dag K_i =\mathbb{I}_{\mathcal{H}_A}$, we
get $R_0=\frac{1}{\sqrt{d_A}}P\mathbb{I}_A
P=\frac{1}{\sqrt{d_A}}P$. Assume the equality in Eq.~(\ref{thm1_eq})
holds, then $A=kP$ for some $k\in\mathbb{R}\backslash \{0\}$. Since
$\mathrm{Tr}\left(A^2\right)=k^2 d_S$ and
$\mathrm{Tr}\left(A\right)^2=k^2 d_S^2$, the conditioin
$\mathrm{Tr}\left(A^2\right)\geq \mathrm{Tr}\left(A\right)^2$ implies
$d_S=1$, which concludes the proof of Theorem \ref{thm}.

\section{The asymptotic behavior of the Hypergeometric function}
\label{asy_appendix}
In this section, we derive the asymptotic behavior of the Hypergeometric
function ${}_2F_1\left(-uN,-uN;(1-2u)N+1;z\right)$ for $u\in[0,1/2)$. It is known that the
Hypergeometric function ${}_2F_1\left(a,b;c;z\right)$ is the solution
for the following differential equation satisfying $f(z=0)=1$:
\begin{align}
 z(1-z)\frac{d^2}{dz^2}f+(c-(a+b+1)z)\frac{d}{dz}f-abf=0.
\end{align}

Defining a function $g(z)\equiv \frac{1}{N}\ln{f(z)}$, we get
\begin{align}
 f^{-1}\frac{d}{dz}f(z)=N\frac{d}{dz}g(z),\quad f^{-1}\frac{d^2}{dz^2}f(z)=\left(N\frac{d}{dz}g(z)\right)^2+N\frac{d^2}{dz^2}g(z).
\end{align}
Therefore, for $N\gg1$, $g(z)$ satisfies
\begin{align}
 z(1-z)\left(\frac{d}{dz}g(z)\right)^2+(\gamma-(\alpha+\beta)z)\frac{d}{dz}g-\alpha\beta=0,
\end{align}
where we have assumed $a,b,c\in\mathbb{R}$ and they have the form of
$a=\alpha N +a'$, $b=\beta N +b'$ and $c=\gamma N+c'$ with
$N$-independent constants $a',\,b',\,c',\,\alpha,\,\beta$, and $\gamma$. Then,
$\frac{d}{dz}g(z)$ satisfies
\begin{align}
 \frac{d}{dz}g(z)&=\frac{-(\gamma-(\alpha+\beta)z)\pm\sqrt{\left(\gamma-(\alpha+\beta)z\right)^2+4\alpha\beta
 z(1-z)}}{2z(1-z)}\nonumber\\
 &=\frac{-(\gamma-(\alpha+\beta)z)\pm|\gamma|\sqrt{1+s z+ tz^2}}{2z(1-z)} ,
\end{align}
where we have defined $s\equiv
\frac{4\alpha\beta-2\gamma(\alpha+\beta)}{\gamma^2}$ and $t\equiv
\frac{(\alpha-\beta)^2}{\gamma^2}$. One can confirm that
\begin{align}
 \frac{\sqrt{1+sz
 +tz^2}}{z(1-z)}&=\frac{\sqrt{1+s+t}}{1-z}+\frac{1}{z}-\frac{s+\frac{s+2tz}{\sqrt{1+sz+tz^2}}}{2+sz+2\sqrt{1+s
 z+tz^2}}-\sqrt{t}\frac{\left(2t+\frac{\sqrt{t}\left(s+2tz\right)}{\sqrt{1+
 sz +tz^2}}\right)}{s+2t z+2\sqrt{t}\sqrt{1+sz+tz^2}}\nonumber\\
 &\qquad\qquad +\frac{\sqrt{1+s+t}\left(s+2t+\frac{\sqrt{1+s+t}\left(s+2tz\right)}{\sqrt{1+sz+tz^2}}\right)}{2+s+sz+2tz+2\sqrt{1+s+t}\sqrt{1+sz+tz^2}}, 
\end{align}
meaning that
\begin{align}
 &\int dz  \frac{\sqrt{1+sz
 +tz^2}}{z(1-z)}\nonumber\\
 &=-\sqrt{1+s+t}\ln{\left(1-z\right)}+\ln{z}-\ln{\left(2+s
 z+2\sqrt{1+s z+tz^2}\right)}-\sqrt{t}\ln{\left(s+2tz +2\sqrt{t}\sqrt{1+sz+tz^2}\right)}\nonumber\\
 &\qquad\qquad+\sqrt{1+s+t}\ln{\left(2+s+sz+2tz+2\sqrt{1+s+t}\sqrt{1+sz+tz^2}\right)}.
\end{align}
Imposing the boundary condition $g(z=0)=0$, we get
\begin{align}
 &g(z)\nonumber\\
 &=
 \frac{1}{2}\left(\gamma-(\alpha+\beta)-\gamma\sqrt{1+s+t}\right)\ln{\left(1-z\right)}\nonumber\\
 &\qquad\qquad+\frac{\gamma}{2}\left(-\ln{\left(\frac{2+sz+2\sqrt{1+sz+tz^2}}{4}\right)}-\sqrt{t}\ln{\left(\frac{s+2tz+2\sqrt{t}\sqrt{1+sz+tz^2}}{s+2\sqrt{t}}\right)}\right.\nonumber\\
 &\qquad\qquad\left.+\sqrt{1+s+t}\ln{\left(\frac{2+s+sz+2tz+2\sqrt{1+s+t}\sqrt{1+sz+tz^2}}{2+s+2\sqrt{1+s+t}}\right)}\right) \label{eq_asy}.
\end{align}
Since $\sqrt{1+s+t}=\left|\frac{\gamma-(\alpha+\beta)}{\gamma}\right|$,
the divergent term at $z=1$ vanishes if
$\frac{\gamma-(\alpha+\beta)}{\gamma}\geq 0$. For $\alpha=\beta=-u$ and
$\gamma=1-2u$, $\frac{\gamma-(\alpha+\beta)}{\gamma}=\frac{1}{1-2u}>0$
holds for $u\in[0,1/2)$. Substituting $s=\frac{4u(1-u)}{(1-2u)^2}$ and
$t=0$ into Eq.~(\ref{eq_asy}),
${}_2F_1\left(-uN,-uN;(1-2u)N+1;z\right)\sim \exp{\left(Ng_u(z)\right)}$
in the limit of $N\to\infty$, where
\begin{align}
 g_u(z)&\equiv
 -\left(\frac{1}{2}-u\right)\ln{\left(\frac{2+\frac{4u(1-u)}{(1-2u)^2}z+2\sqrt{1+\frac{4u(1-u)}{(1-2u)^2}z}}{4}\right)}\nonumber\\
 &\qquad\qquad +\frac{1}{2}\ln{\left(\frac{2+\frac{4u(1-u)}{(1-2u)^2}(1+z)+\frac{2}{1-2u}\sqrt{1+\frac{4u(1-u)}{(1-2u)^2}z}}{4\left(\frac{1-u}{1-2u}\right)^2}\right)}.
\end{align}
Since
\begin{align}
&
 \frac{2+\frac{4u(1-u)}{(1-2u)^2}(1+z)+\frac{2}{1-2u}\sqrt{1+\frac{4u(1-u)}{(1-2u)^2}z}}{4\left(\frac{1-u}{1-2u}\right)^2}> \frac{2+\frac{4u(1-u)}{(1-2u)^2}z+2\sqrt{1+\frac{4u(1-u)}{(1-2u)^2}z}}{4} 
\end{align}
and $1/2-u< 1/2$ holds for $u\in(0,1/2)$ and $z> 0$, $g_u(z)>0$ holds, meaning that $\exp{\left(Ng_u(z)\right)}$ is exponentially large in $N$
for any $u\in(0,1/2)$ and $z> 0$.


\begin{thebibliography}{99}

\bibitem {Sugita} A. Sugita, {\em  On the Foundation of Quantum Statistical Mechanics} (in Japanese), RIMS Kokyuroku (Kyoto) {\bf 1507}, 147 (2006).\\ 
A. Sugita, {\em  On the Basis of Quantum Statistical Mechanics},
	 Nonlinear Phenom. Complex Syst. {\bf 10}, 192 (2007).\\
	 See also, S. Goldstein, J. L. Lebowitz, R. Tumulka, and N
	 Zanghì, {\em Canonical Typicality}, Phys. Rev. Lett. {\bf 96},
	 050403 (2006). S. Popescu, A. J. Short, and A. Winter, {\em
	 Entanglement and the foundations of statistical mechanics},
	 Nature Phys. {\bf 2}, 754 (2006).

\bibitem{Tasaki} H. Tasaki, {\em Typicality of Thermal Equilibrium and Thermalization in Isolated Macroscopic Quantum Systems}, J. Stat. Phys. {\bf 163}, 5, 937 (2016).

\bibitem{Lebowitz} J. L. Lebowitz, {\em Boltzmann's Entropy and Time's Arrow}, Physics Today {\bf 46}, 9, 32 (1993).

\bibitem{Bardeen_Carter_Hawking} J. M. Bardeen, B. Carter, and S. W. Hawking, {\em The four laws of black hole mechanics}, Commun. Math. Phys. {\bf 31}, 161 (1973).

\bibitem{Hawking_rad} S. W. Hawking, {\em Particle creation by black holes}, Commun. Math. Phys. {\bf 43}, 199 (1975).

\bibitem{Bricmont} See e.g., J. Bricmont, {\em  Science of Chaos or Chaos in Science?}, Ann. of the New York academy of science, {\bf 775}, 131 (1995).

\bibitem {Lubkin} E. Lubkin, {\em Entropy of an n-system from its correlation with a k-reservoir}, J. Math. Phys. {\bf 19}, 1028 (1978).

\bibitem {Lloyd_Pagels} S. Lloyd and H. Pagels, {\em Complexity as thermodynamic depth}, Ann. Phys. {\bf 188}, 186 (1988).

\bibitem{Page_sub} D. N. Page, {\em Average entropy of a subsystem}, Phys. Rev. Lett. {\bf 71}, 1291 (1993).

\bibitem{Sugiura_Shimizu} S. Sugiura and A. Shimizu, {\em Canonical Thermal Pure Quantum State}, Phys. Rev. Lett. {\bf 111}, 010401 (2013).

\bibitem{Fujita_Nakagawa_Sugiura_Watanabe} H. Fujita, Y. O. Nakagawa,
	S. Sugiura, and M. Watanabe, {\em Universality in volume-law entanglement of scrambled pure quantum states}, Nature Commun. {\bf 9}, 1635 (2018).

\bibitem{Vidmar_Hackl_Bianchi_Rigol}  L. Vidmar, L. Hackl, E. Bianchi, and M. Rigol, {\em Entanglement Entropy of Eigenstates of Quadratic Fermionic Hamiltonians}, Phys. Rev. Lett. {\bf 119}, 020601 (2017).

\bibitem{Bennett_Shor_Smolin_Thapliyal} C. H. Bennett, P. W. Shor, J. A. Smolin, and A. V. Thapliyal, {\em Entanglement-Assisted Classical Capacity of Noisy Quantum Channels}, Phys. Rev. Lett. {\bf 83}, 3081 (1999).

 \bibitem{Calabrese_Cardy} P. Calabrese and J. Cardy, {\em Evolution of entanglement entropy in one-dimensional systems}, J. Stat. Mech. {\bf 04}, P04010 (2005). 

\bibitem{Maldacena} J. M. Maldacena, {\em The Large-$N$ Limit of Superconformal Field Theories and Supergravity}, Adv. Theor. Math. Phys. {\bf 2}, 231 (1998).
	 
\bibitem{Bhattacharya_Nozaki_Takayanagi_Ugajin}J. Bhattacharya, M. Nozaki, T. Takayanagi, and T. Ugajin, {\em Thermodynamical Property of Entanglement Entropy for Excited States}, Phys. Rev. Lett. {\bf 110}, 091602 (2013).

\bibitem{Nozaki_Numasawa_Prudenziati_Takayanagi} M. Nozaki, T. Numasawa, A. Prudenziati, and T. Takayanagi, {\em Dynamics of entanglement entropy from Einstein equation}, Phys. Rev. D {\bf 88}, 026012 (2013).

\bibitem{Fan_Zhang_Shen_Zhai} R. Fan, P. Zhang, H. Shen, and H. Zhai, {\em Out-of-Time-Order Correlation for Many-Body Localization}, Science Bulletin {\bf 62} (10), 707 (2017).

\bibitem{EE_exp1} R. Islam, R. Ma, P. M. Preiss, M. E. Tai, A. Lukin, M. Rispoli, and M. Greiner, {\em Measuring entanglement entropy in a
quantum many-body system}, Nature {\bf 528}, 77, (2015).

\bibitem{EE_exp2} A. M. Kaufman, M. E. Tai, A. Lukin, M. Rispoli,
	R. Schittko, P. M. Preiss, and M. Greiner, {\em Quantum thermalization through entanglement in an isolated many-body system} Science {\bf 353}, 794 (2016).


\bibitem{Hawking_BHIP} S. W. Hawking, {\em Breakdown of predictability in gravitational collapse}, Phys. Rev. D {\bf 14}, 2460 (1976). 

\bibitem{Hotta_Schutzhold_Unruh} M. Hotta, R. Sch\"utzhold, and W. G. Unruh, {\em Partner particles for moving mirror radiation and black hole evaporation}, Phys. Rev. D {\bf 91}, 124060 (2015).

\bibitem{Page_curve} D. N. Page, {\em Information in black hole radiation}, Phys. Rev. Lett. {\bf 71}, 3743 (1993).

\bibitem{Almheiri_Marolf_Polchinski_Sully} A. Almheiri, D. Marolf, J. Polchinski, and J. Sully, {\em Black holes: complementarity or firewalls?}, J. High Energ. Phys. {\bf 02}, 062 (2013).

\bibitem{Almheiri_Marolf_Polchinski_Stanford_Sully}A. Almheiri,
	D. Marolf, J. Polchinski, D. Stanford, and J. Sully, {\em An apologia for firewalls}, J. High Energ. Phys. {\bf 09}, 18 (2013).

\bibitem{Aharonov_Casher_Nussinov}Y. Aharonov, A. Casher, and S. Nussinov, {\em The unitarity puzzle and Planck mass stable particles}, Phys. Lett. B {\bf 191}, 51 (1987).

\bibitem{Wilczek}Z. Wilczek, {\em Quantum purity at a small price: Easing a black hole paradox}, in Proceedings of In-
ternational Symposium in Houston on Black holes, membranes, wormholes and superstrings (1992),
published in {\em Blackholes, Membranes, Wormholes and Superstrings} (edited by S. Kalara and DV
Nanopoulos), World scientific, Singapore (1993).

\bibitem{Unruh_Wald}W. G. Unruh and R. M. Wald, {\em Information loss}, Rep. Prog. Phys. {\bf 80}, 092002 (2017).

\bibitem{Hawking_Perry_Strominger} S. W. Hawking, M. J. Perry, and A. Strominger, {\em Soft Hair on Black Holes}, Phys. Rev. Lett. {\bf 116}, 231301 (2016).

\bibitem{Bekenstein} J. D. Bekenstein, {\em Black Holes and Entropy}, Phys. Rev. D {\bf 7}, 2333 (1973).

\bibitem{Hawking_instability} S. W. Hawking, {\em Black holes and thermodynamics}, Phys. Rev. D {\bf 13}, 191 (1976).

\bibitem{Hotta_Nambu_Yamaguchi}  M. Hotta, Y. Nambu, and K. Yamaguchi, {\em Soft-hair-enhanced entanglement beyond Page curves in a black hole evaporation qubit model},  Phys. Rev. Lett. {\bf 120}, 181301 (2018).

\bibitem{Collins} B. Collins, {\em  Moments and Cumulants of Polynomial random variables on unitary groups, the Itzykson-Zuber integral and free probability}, Int. Math. Res. Not. {\bf 17}, 953 (2003).
\end{thebibliography}
\end{document}